\newtheorem{theorem}{Theorem}
\title{Online Risk-Averse Submodular Maximization}
\author{
Tasuku Soma$^1$$^2$
\and
Yuichi Yoshida$^3$
\affiliations
$^1$ The University of Tokyo
$^2$ Massachusetts Institute of Technology\\
$^3$ National Institute of Informatics\\
\emails
tasuku\_soma@mist.i.u-tokyo.ac.jp,
tasuku@mit.edu,
yyoshida@nii.ac.jp
}
\newcommand{\R}{\mathbb{R}}
\newcommand{\caB}{\mathcal{B}}
\newcommand{\caD}{\mathcal{D}}
\newcommand{\caP}{\mathcal{P}}
\newcommand{\caI}{\mathcal{I}}
\newcommand{\caZ}{\mathcal{Z}}
\newcommand{\eps}{\varepsilon}
\newcommand{\bx}{{\boldsymbol{x}}}
\newcommand{\by}{{\boldsymbol{y}}}
\newcommand{\bd}{{\boldsymbol{d}}}
\newcommand{\bg}{{\boldsymbol{g}}}
\newcommand{\br}{{\boldsymbol{r}}}
\newcommand{\bz}{{\boldsymbol{z}}}
\newcommand{\bfzero}{\mathbf{0}}
\newcommand{\regFPL}{r_{\mathrm{FPL},s}}
\newcommand{\regOGD}{r_{\mathrm{OGD}}}
\DeclareMathOperator*{\E}{\mathbf{E}}
\DeclareMathOperator{\VaR}{VaR}
\DeclareMathOperator{\CVaR}{CVaR}
\DeclareMathOperator{\regret}{regret}
\DeclareMathOperator{\proj}{proj}
\DeclareMathOperator*{\argmax}{argmax}
\newtheorem{lemma}[theorem]{Lemma}
\theoremstyle{definition}
\newtheorem{assumption}[theorem]{Assumption}
\newtheorem{remark}[theorem]{Remark}
\DeclarePairedDelimiter{\norm}{\lVert}{\rVert}
\DeclarePairedDelimiter{\abs}{\lvert}{\rvert}
\DeclarePairedDelimiter{\inprod}{\langle}{\rangle}
\begin{document}

\maketitle

\begin{abstract}
    We present a polynomial-time online algorithm for maximizing the conditional value at risk (CVaR) of a monotone stochastic submodular function.
    Given $T$ i.i.d.\ samples from an underlying distribution arriving online, our algorithm produces a sequence of solutions that converges to a ($1-1/e$)-approximate solution with a convergence rate of $O(T^{-1/4})$ for monotone continuous DR-submodular functions.
    Compared with previous offline algorithms, which require $\Omega(T)$ space, our online algorithm only requires $O(\sqrt{T})$ space.
    We extend our online algorithm to portfolio optimization for monotone submodular set functions under a matroid constraint.
    Experiments conducted on real-world datasets demonstrate that our algorithm can rapidly achieve CVaRs that are comparable to those obtained by existing offline algorithms.
\end{abstract}

\allowdisplaybreaks

\section{Introduction}
\emph{Submodular function maximization} is a ubiquitous problem naturally arising in broad areas such as machine learning, social network analysis, economics, combinatorial optimization, and decision-making~\citep{Krause2014survey,Buchbinder2018}.
Submodular maximization in these applications is \emph{stochastic} in nature, i.e., the input data could be a sequence of samples drawn from some underlying distribution, or there could be some uncertainty in the environment.
In this paper, we consider nonnegative monotone \emph{continuous DR-submodular}\footnote{See Section~\ref{sec:pre} for the definition of continuous DR-submodularity.}~\citep{Bian2017} functions $F(\bx; z)$ parameterized by a random variable $z$ drawn from some distribution $\caD$.
The simplest approach for such stochastic submodular objectives is to maximize the expectation $\E_{z \sim \caD}[F(\bx; z)]$, which has been extensively studied~\citep{Karimi2017a,Hassani2017a,Mokhtari2018,Karbasi2019}.

However, in real-world decision-making tasks in finance, robotics, and medicine, we sometimes must be \emph{risk-averse}:
We want to minimize the risk of suffering a considerably small gain rather than simply maximizing the expected gain~\citep{Mansini2007,Yau2011,Tamar2015}.
In medical applications, for example, we must avoid catastrophic events such as patient fatalities.
In finance and robotics, all progress ceases when poor decisions cause bankruptcies or irreversible damage to robots.

\emph{Conditional value at risk (CVaR)} is a popular objective for such risk-averse domains~\citep{Rockafellar2000,Krokhmal2002}.
Formally, given a parameter $\alpha \in [0,1]$, the CVaR of a feasible solution $\bx \in \R^n$ is defined as
\[
     \CVaR_{\alpha, \caD}(\bx) = \E_{z\sim\caD}[F(\bx;z) \mid F(\bx; z) \leq \VaR_{\alpha,\caD}(\bx)],
\]
where $\VaR_{\alpha,\caD}(X)$ is the $\alpha$-quantile of the random variable $F(\bx;z)$, i.e.,
\[
    \VaR_{\alpha,\caD}(X) = \sup\left\{\tau \in \R: \Pr_{z\sim\caD}(F(\bx;z) \leq \tau) \leq \alpha \right\} .
\]
Note that when $\alpha = 1$, the CVaR is simply the expected value of $F(\bx;z)$.
Since $\alpha$ is typically set to be $0.10$ or $0.05$ in practice, we assume that $\alpha$ is a fixed constant throughout this paper.
When $\alpha$ is clear from the context, we omit it from the notations.
CVaR can also be characterized by a variational formula:
\[
    \CVaR_\caD(\bx) = \max_{\tau\in[0,1]} \tau - \frac{1}{\alpha}\E_{z\sim\caD}{[\tau - F(\bx;z)]}_+,
\]
where ${[\cdot]}_+ = \max\{\cdot, 0\}$ \citep{Rockafellar2000}.

\citet{Maehara2015} initiated the study of maximizing the CVaR of stochastic submodular \emph{set} functions.
It was shown that the CVaR of a stochastic submodular set function is not necessarily submodular, and that it is impossible to compute a single set that attains any multiplicative approximation to the optimal CVaR.
\citet{Ohsaka2017} introduced a relaxed problem of finding a portfolio over sets rather than a single set, and devised the first CVaR maximization algorithm with an approximation guarantee for the influence maximization problem~\citep{Kempe2003}, a prominent example of discrete submodular maximization.
\citet{Wilder2018} further considered this approach and devised an algorithm called \textsc{RASCAL} for maximizing the CVaR of continuous DR-submodular functions subject to a down-closed convex set.

The algorithms mentioned above are \emph{offline (batch) methods}, i.e., a set of samples drawn from the underlying distribution $\caD$ is given as the input.
However, because the number of samples needed to accurately represent $\caD$ can be exceedingly large, it is often inefficient or even impossible to store all the samples in memory.
Further, when a new sample is observed, these offline algorithms must be rerun from scratch.

In the machine learning community, \emph{online methods}, which can efficiently handle large volumes of data, have been extensively studied~\citep{Hazan2016OCO}.
Online methods read data in a streaming fashion, and update the current solution using a few data elements stored in memory.
Further, online methods can update a solution at a low computational cost.

\subsection{Our contributions}
In this work, we propose \emph{online algorithms} for maximizing the CVaR of stochastic submodular objectives in continuous and discrete settings.

\paragraph{Continuous setting} Let $z_1, \dots, z_T$ be i.i.d.\ samples drawn from $\caD$ arriving sequentially and $K \subseteq \R^n$ be a down-closed convex set.
Our main result is a polynomial-time online algorithm, \textsc{StochasticRASCAL}, that finds $\bx \in K$ such that
\begin{align*}
    \E[\CVaR_\caD(\bx)] \geq \left(1-\frac{1}{e}\right) \CVaR_\caD(\bx^*) - O(T^{-1/4}),
\end{align*}
for any $\bx^* \in K$, where the expectation is taken over $z_1, \dots, z_T$ and the randomness of the algorithm.
\textsc{StochasticRASCAL} only stores $O(\sqrt{T})$ data in memory, drawing a contrast to \textsc{RASCAL}, which store all $T$ data points.
Note that the approximation ratio of $1-1/e$ is optimal for any algorithm that performs polynomially many function value queries even if $\alpha=1$~\citep{Vondrak2013}.
We also conduct several experiments on real-world datasets to show the practical efficiency of our algorithm.
We demonstrate that our algorithm rapidly achieves CVaR comparable to that obtained by known offline methods.

\paragraph{Discrete setting}
As an application of the above algorithm, we devise an online algorithm to create a portfolio that maximizes the CVaR of a discrete submodular function subject to a matroid constraint.
Let $f(X; z) : 2^V \to [0,1]$ be a monotone stochastic submodular set function on a ground set $V$ and $\caI \subseteq 2^V$ be a matroid.
The goal is to find a portfolio $\caP$ of feasible sets in $\caI$ that maximizes
\[
    \CVaR_\caD(\caP) = \max_{\tau\in[0,1]} \tau - \frac{1}{\alpha}\E_{z\sim\caD}[\tau - \E_{X\sim\caP}f(X;z)]_+,
\]
given i.i.d.\ samples from $\caD$.
We show that this problem can be reduced to online CVaR maximization of continuous DR-submodular functions on a matroid polytope, and devise a polynomial-time online approximation algorithm for creating a portfolio $\caP$ such that
\[
    \E[\CVaR_\caD(\caP)] \geq \left(1-\frac{1}{e}\right) \CVaR_\caD(\caP^*) - O(T^{-1/10})
\]
for any portfolio $\caP^*$ over feasible sets.
Note that this algorithm is the first online algorithm that converges to a $(1-1/e)$-approximation portfolio in the discrete setting, which generalizes the known offline algorithm~\citep{Wilder2018} to the i.i.d.~setting.

\subsection{Our techniques}
To analyze our online algorithms, we introduce a novel adversarial online learning problem, which we call \emph{adversarial online submodular CVaR learning}.
This online learning problem is described as follows.
For $t = 1, \dots, T$, the learner chooses $\bx_t \in K$ and $\tau_t \in [0,1]$ possibly in a randomized manner.
After $\bx_t$ and $\tau_t$ are chosen, the adversary reveals a monotone continuous DR-submodular function $F_t : K \to [0,1]$ to the learner.
The goal of the learner is to minimize the approximate regret
\[
    \regret_{1-1/e}(T) = \left(1-\frac{1}{e}\right)\sum_{t=1}^T H_t(\bx^*, \tau^*) - \sum_{t=1}^T H_t(\bx_t, \tau_t)
\]
for arbitrary $\bx^* \in K$ and $\tau^* \in [0,1]$,
where the function $H_t$ is given by
\[
    H_t(\bx, \tau) = \tau - \frac{1}{\alpha}{[\tau - F_t(\bx)]}_+.
\]
We devise an efficient algorithm that achieves $O(T^{3/4})$ approximate regret in expectation.
Further, we show that, given an online algorithm with a sublinear approximate regret, we can construct an online algorithm that achieves a $(1-1/e)$-approximation to CVaR maximization, whose convergence rate is $\E[\regret_{1-1/e}(T)]/T$.
Combining these results, we obtain an online $(1-1/e)$-approximation algorithm for CVaR maximization with a convergence rate of $O(T^{-1/4})$.

We remark that adversarial online submodular CVaR learning may be of interest in its own right:
Although the objective function $H_t$ is neither monotone nor continuous DR-submodular in general, we can design an online algorithm with a sublinear ($1-1/e$)-regret by exploiting the underlying structure of $H_t$.
As per our knowledge, an online algorithm for non-monotone and non-DR-submodular maximization does not exist in the literature.

\subsection{Related work}

Several studies focused on CVaR optimization in the adversarial online settings and i.i.d.\ settings.
\citet{Tamar2015} studied CVaR optimization over i.i.d.\ samples and analyzed stochastic gradient descent under the strong assumption that CVaR is continuously differentiable.
Recently, \citet{Cardoso19a} introduced the concept of the CVaR regret for convex loss functions and provided online algorithms for minimizing the CVaR regret under bandit feedback.

Online and stochastic optimization of submodular maximization have been extensively studied in \citet{Streeter2008,Streeter2009,Golovin2014,Karimi2017a,Hassani2017a,Mokhtari2018,Chen2018,Roughgarden2018,Soma2019,Karbasi2019,Zhang2019}.
These studies optimize either the approximate regret or the expectation and do not consider CVaR.

Another line of related work is \emph{robust submodular maximization}~\citep{Krause2008b,Chen2017a,Anari2019}.
In robust submodular maximization, we maximize the minimum of $N$ submodular functions, i.e.,  $\min_{i=1}^N f_i(X)$.
Robust submodular maximization is the limit of CVaR maximization, where $\caD$ is the uniform distribution over $N$ values and $\alpha \to 0$.
Recently, \citet{Staib2019b} proposed \emph{distributionally robust submodular optimization}, which maximizes $\min_{\caD \in P} \E_{z \sim \caD} f(X; z)$ for an uncertainty set $P$ of distributions.
It is known that CVaR can be formulated in the distributionally robust framework~\citep{Shapiro2014}.
However, the algorithms proposed by \citet{Staib2019b} require that $P$ is a subset of the $N$-dimensional probability simplex;
moreover, their time complexity depends on $N$.
Our algorithms work even if $N$ is infinite.

\subsection{Organization of this paper}
This paper is organized as follows.
Section~\ref{sec:pre} introduces the background of submodular optimization.
Sections~\ref{sec:DR} and~\ref{sec:matroid} describe our algorithms for continuous and discrete setting, respectively.
Section~\ref{sec:experiments} present experimental results using real-world dataset.
The omitted analysis and the details of adversarial setting can be found in Appendix.

\section{Preliminaries}\label{sec:pre}
Throughout the paper, $V$ denotes the ground set and $n$ denotes the size of the ground set.
For a set function $f:2^V \to \R$, the \emph{multilinear extension} $F : [0,1]^V \to \R$ is defined as $F(\bx) = \sum_{S \subseteq V}f(S)\prod_{i \in S}x_i \prod_{i \notin S}(1-x_i)$.
For a matroid on $V$, the \emph{base polytope} is the convex hull of bases of the matroid.
It is well-known that the linear optimization on a base polytope can be solved by the greedy algorithm~\citep{Fujishige2005}.

We denote the Euclidean norm and inner product by $\norm{\cdot}$ and $\inprod{\cdot, \cdot}$, respectively.
The $\ell^p$ norm ($1\leq p \leq \infty$) is denoted by $\norm{\cdot}_p$.
The Euclidean projection of $\bx$ onto a set $K$ is denoted by $\proj_K(\bx)$.
A convex set $K \subseteq \R_{\geq 0}^n$ is said to be \emph{down-closed} if $\by \in K$ and $\bfzero \leq \bx \leq \by$ imply $\bx \in K$.
A function $f: \R^n \to \R$ is said to be \emph{$L$-Lipschitz (continuous)} for $L > 0$ if $\abs{f(\bx)-f(\by)} \leq L\norm{\bx-\by}$ for all $\bx, \by$.
We say that $f$ is \emph{$\beta$-smooth} for $\beta > 0$ if $f$ is continuously differentiable and $\norm{\nabla f(\bx) - \nabla f(\by)} \leq \beta \norm{\bx - \by}$.
A smooth function $F: \R^n \to \R$ is said to be \emph{continuous DR-submodular}~\citep{Bian2017} if $\frac{\partial^2 F}{\partial x_i\partial x_j} \leq 0$ for all $i, j$.
The multilinear extension of a submodular function is known to be DR-submodular~\citep{Calinescu2011}.
The continuous DR-submodularity implies \emph{up-concavity}:
For a continuous DR-submodular function $F$, $\bx \in \R^n$, and $\bd \geq \bfzero$, the univariate function $t \mapsto F(\bx + t\bd)$ is concave.

The uniform distribution of a set $K$ is denoted by $\mathrm{Unif}(K)$.
The standard normal distribution is denoted by $N(\bfzero, I)$.

\section{CVaR Maximization of Continuous DR-submodular Functions}\label{sec:DR}
We present our online algorithm for CVaR maximization via i.i.d.\ samples.
Let $F_t : \R^n \to [0,1]$ be the monotone continuous DR-submodular function corresponding to the $t$-th sample $z_t$, i.e.,
\[
    F_t(\bx) = F(\bx; z_t)
\]
for $t=1, \dots, T$.
Similarly, define an auxiliary function $H_t$ with respect to $z_t$ by
\[
    H_t(\bx, \tau) = \tau - \frac{1}{\alpha}{[\tau - F_t(\bx)]}_+
    = \tau - \frac{1}{\alpha}{[\tau - F(\bx; z_t)]}_+.
\]
for $t=1, \dots, T$.
Let $K \subseteq \R_{\geq 0}^n$ be a down-closed convex set.
Formally, we make the following very mild assumptions on $F_t$ and $K$.

\begin{assumption}\label{asmp:DR}
    \phantom{x}
    \begin{itemize}
        \item For all $t$, $F_t$ is $L$-Lipschitz and $\beta$-smooth, and $\norm{\nabla F_t} \leq G$.
        \item The diameter of $K$ is bounded by $D$.
        \item We are given a linear optimization oracle over $K$.
    \end{itemize}
\end{assumption}
When the underlying norm is the $\ell^p$-norm with $p \neq 2$, we write $G_p$ to emphasize it.
For example, if $F_t$ is the multilinear extension of a submodular set function $f_t:2^V \to [0,1]$ and $K$ is the base polytope of a rank-$k$ matroid, we have $L = \beta = O(n)$, $G_\infty = 1$, $D=O(\sqrt{k})$.

Our algorithm borrows some ideas from an algorithm called \textsc{RASCAL}~\citep{Wilder2018}.
First, we define a smoothed auxiliary function $\tilde{H}_t: {[0,1]}^V \times [0,1] \to [0,1]$ as
\[
    \tilde{H}_t(\bx, \tau) = \frac{1}{u} \int_0^u \left( \tau + \xi - \frac{1}{\alpha}{[\tau + \xi - F_t(\bx)]}_+ \right) d\xi,
\]
where $u > 0$ is a smoothing parameter specified later.
This smoothing guarantees that $\tilde{H}_t$ is differentiable for all $\bx$ and has Lipschitz continuous gradients.

\begin{lemma}[Lemma~6 of~\cite{Wilder2018}]
    \phantom{x}
    \begin{enumerate}
        \item $\abs{H_t(\bx, \tau) - \tilde{H}_t(\bx, \tau)} \leq \frac{u(1+1/\alpha)}{2}$ for all $\bx$ and $\tau$.
        \item If $F_t$ is $L$-Lipschitz and $\beta$-smooth, and $\norm{\nabla F_t} \leq G$, then $\nabla_\bx \tilde{H}_t$ is $\frac{1}{\alpha}(\beta+\frac{LG}{u})$-Lipschitz.
    \end{enumerate}
\end{lemma}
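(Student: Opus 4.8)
The plan is to treat the two parts separately, exploiting the fact that $\tilde H_t$ is obtained from $H_t$ by averaging a $\tau$-shift over the window $[0,u]$. Indeed, since the integrand equals $H_t(\bx,\tau+\xi)$, we have the identity $\tilde H_t(\bx,\tau)=\frac{1}{u}\int_0^u H_t(\bx,\tau+\xi)\,d\xi$, which I would state first and use throughout.

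For part~1, I would write the difference as $H_t(\bx,\tau)-\tilde H_t(\bx,\tau)=\frac{1}{u}\int_0^u\bigl(H_t(\bx,\tau)-H_t(\bx,\tau+\xi)\bigr)\,d\xi$ and bound the integrand pointwise. The key observation is that for fixed $\bx$ the map $\tau\mapsto H_t(\bx,\tau)=\tau-\frac{1}{\alpha}{[\tau-F_t(\bx)]}_+$ is the sum of the $1$-Lipschitz map $\tau\mapsto\tau$ and $-\frac1\alpha$ times the $1$-Lipschitz map $\tau\mapsto{[\tau-F_t(\bx)]}_+$, hence is $(1+1/\alpha)$-Lipschitz in $\tau$. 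Therefore $\abs{H_t(\bx,\tau)-H_t(\bx,\tau+\xi)}\le(1+1/\alpha)\xi$, and integrating gives $\frac{1}{u}(1+1/\alpha)\int_0^u\xi\,d\xi=\frac{u(1+1/\alpha)}{2}$, as required. This part is essentially immediate.

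For part~2, the main work is computing $\nabla_\bx\tilde H_t$ and showing that the averaging removes the discontinuity that $\nabla_\bx H_t$ would otherwise have. Differentiating under the integral (the integrand is differentiable in $\bx$ for a.e.\ $\xi$, with gradient of norm at most $\frac1\alpha\norm{\nabla F_t}\le G/\alpha$, so dominated convergence applies), I obtain $\nabla_\bx\tilde H_t(\bx,\tau)=\frac{1}{\alpha}\,g(\bx,\tau)\,\nabla F_t(\bx)$, where $g(\bx,\tau)=\frac1u\int_0^u \mathbf 1[\tau+\xi>F_t(\bx)]\,d\xi$ is the fraction of the window on which the positive part is active. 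Explicitly, $g(\bx,\tau)=\min\{1,\max\{0,\,1-(F_t(\bx)-\tau)/u\}\}$, a clip of an affine function and hence $\frac1u$-Lipschitz in the scalar $F_t(\bx)-\tau$. This is precisely where the smoothing pays off: without the integral, $g$ would be the indicator $\mathbf 1[\tau>F_t(\bx)]$, which is not Lipschitz, whereas the averaging spreads the jump over a width-$u$ window and produces a Lipschitz factor of order $1/u$.

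Finally, I would establish Lipschitzness in $\bx$ (for fixed $\tau$) via the product-rule splitting $\nabla_\bx\tilde H_t(\bx,\tau)-\nabla_\bx\tilde H_t(\bx',\tau)=\frac1\alpha\bigl[g(\bx,\tau)\bigl(\nabla F_t(\bx)-\nabla F_t(\bx')\bigr)+\bigl(g(\bx,\tau)-g(\bx',\tau)\bigr)\nabla F_t(\bx')\bigr]$, bounding the two terms. For the first I use $g\in[0,1]$ with $\beta$-smoothness of $F_t$ to get $\beta\norm{\bx-\bx'}$; for the second I use $\norm{\nabla F_t}\le G$ together with the $\frac1u$-Lipschitzness of $g$ and the $L$-Lipschitzness of $F_t$, giving $G\cdot\frac1u\cdot L\norm{\bx-\bx'}$. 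Summing and pulling out the $\frac1\alpha$ yields the claimed constant $\frac1\alpha(\beta+LG/u)$. The only subtlety I anticipate is justifying the interchange of gradient and integral at the non-differentiable points of ${[\cdot]}_+$, but since these form a $\xi$-null set for each fixed $(\bx,\tau)$ and the gradients are uniformly bounded, this is routine.
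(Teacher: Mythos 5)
Your proof is correct. Note, however, that the paper does not prove this lemma at all: it is imported verbatim as Lemma~6 of \cite{Wilder2018}, so there is no internal proof to compare against. Your argument is a valid self-contained derivation of the cited result: the identity $\tilde H_t(\bx,\tau)=\frac{1}{u}\int_0^u H_t(\bx,\tau+\xi)\,d\xi$ plus the $(1+1/\alpha)$-Lipschitzness of $\tau\mapsto H_t(\bx,\tau)$ gives part~1 exactly, and for part~2 your formula $\nabla_\bx\tilde H_t(\bx,\tau)=\frac{1}{\alpha}g(\bx,\tau)\nabla F_t(\bx)$ with $g$ the clipped affine function, followed by the product-rule split into a $\beta$-smoothness term and a $\frac{LG}{u}$ term, yields precisely the constant $\frac{1}{\alpha}(\beta+\frac{LG}{u})$. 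One small remark: the interchange-of-integral subtlety you flag can be avoided entirely by writing $\tilde H_t(\bx,\tau)=\tau+\frac{u}{2}-\frac{1}{\alpha}\phi(F_t(\bx)-\tau)$ with $\phi(v)=\frac{1}{u}\int_0^u[\xi-v]_+\,d\xi$, which is an explicit $C^1$ piecewise-quadratic function with $\phi'=-g$; the chain rule then gives the gradient formula directly, with no a.e.\ differentiation argument needed.
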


\begin{lemma}[{\citet{Wilder2018}}]
    The function $\max_{\tau} \tilde{H}_t(\cdot, \tau)$ is monotone and up-concave.
\end{lemma}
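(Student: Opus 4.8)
The plan is to exploit the fact that $\tilde{H}_t(\bx,\tau)$ depends on $\bx$ only through the scalar value $F_t(\bx)$. Writing
\[
    \phi(\tau, s) = \frac{1}{u}\int_0^u \left(\tau + \xi - \frac{1}{\alpha}{[\tau+\xi-s]}_+\right)d\xi,
\]
we have $\tilde{H}_t(\bx,\tau) = \phi(\tau, F_t(\bx))$, so that $\max_\tau \tilde{H}_t(\bx,\tau) = \Psi(F_t(\bx))$, where $\Psi(s) = \max_{\tau\in[0,1]}\phi(\tau,s)$. The whole statement then reduces to two facts about the scalar function $\Psi$: that it is nondecreasing and concave. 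Granting these, monotonicity of $\max_\tau\tilde{H}_t(\cdot,\tau)$ is immediate, since $F_t$ is monotone and $\Psi$ is nondecreasing; and up-concavity follows because for $\bd \geq \bfzero$ the map $t \mapsto F_t(\bx + t\bd)$ is concave (up-concavity of $F_t$), and the composition of a concave nondecreasing function with a concave function is concave.

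First I would record the elementary properties of $\phi$. For fixed $\tau$ and $\xi$, the map $s \mapsto {[\tau+\xi-s]}_+$ is convex and nonincreasing, so $s \mapsto \tau + \xi - \frac{1}{\alpha}{[\tau+\xi-s]}_+$ is concave and nondecreasing; averaging over $\xi\in[0,u]$ preserves both properties, giving that $\phi(\tau, \cdot)$ is concave and nondecreasing for each $\tau$. Hence $\Psi$ is nondecreasing as a pointwise maximum of nondecreasing functions. The harder property is concavity of $\Psi$, and here the key observation is that $(\tau,s) \mapsto {[\tau+\xi-s]}_+$ is jointly convex (a pointwise maximum of the two affine functions $0$ and $\tau+\xi-s$), so the integrand is jointly concave in $(\tau,s)$, and therefore $\phi$ is jointly concave. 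I would then invoke the standard fact that partial maximization of a jointly concave function over a convex set---here $\tau\in[0,1]$---yields a concave function, which delivers concavity of $\Psi$.

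The main obstacle is precisely this last point: concavity in $s$ for each fixed $\tau$ is \emph{not} enough, since a pointwise maximum of concave functions need not be concave. The argument therefore hinges on upgrading ``concave in $s$ for every $\tau$'' to ``jointly concave in $(\tau,s)$,'' which is exactly what makes the partial-maximization lemma applicable. Everything else is routine: the composition rule used for up-concavity is a one-line verification relying on $\Psi$ being both concave and nondecreasing, and the reduction to the one-dimensional function $\Psi$ is purely formal.
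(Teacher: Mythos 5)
Your proof is correct. Note first that this paper contains no proof of the lemma at all---it is imported by citation from \citet{Wilder2018}---so your argument is a self-contained substitute rather than a parallel of anything in the text. The route is sound: $\tilde{H}_t$ indeed depends on $\bx$ only through the scalar $F_t(\bx)$; the integrand is jointly concave in $(\tau,s)$ because ${[\tau+\xi-s]}_+$ is a maximum of two affine functions; partial maximization over the convex set $\tau\in[0,1]$ then makes $\Psi$ concave, while $\Psi$ is nondecreasing as a pointwise maximum of nondecreasing functions; and the rules ``nondecreasing composed with monotone is monotone'' and ``concave nondecreasing composed with concave is concave'' deliver monotonicity and up-concavity of $\Psi\circ F_t$. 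You are also right that joint concavity is the crux---concavity in $s$ for each fixed $\tau$ alone would not survive the maximization. The argument in the cited source rests on the same two pillars (joint concavity plus partial maximization) but applies them directly to the function $(t,\tau)\mapsto \tilde{H}(\bx+t\bd,\tau)$ along a ray $\bd\geq\bfzero$, without factoring through the scalar $s$. That organizational difference matters later in this paper: the mini-batch objective $\bar{H}_b(\bx)=\max_{\tau}\frac{1}{B}\sum_{t}\tilde{H}_t(\bx,\tau)$, whose monotonicity and up-concavity the analysis also relies on, depends on $\bx$ through $B$ distinct scalars $F_t(\bx)$, so your one-dimensional $\Psi$ does not apply verbatim there; you would need either the vector form of the composition rule (a jointly concave function that is nondecreasing in each coordinate, composed with coordinatewise concave functions of $t$, is concave in $t$) or the ray-based argument, which extends to sums and expectations with no change.
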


\subsection{StochasticRASCAL}
We now formally describe our algorithm, \textsc{StochasticRASCAL}.
We note that \textsc{RASCAL} runs the Frank-Wolfe algorithm on a function $\max_{\tau} \sum_{t=1}^T \tilde{H}_t(\cdot, \tau)$.
Owing to the up-concavity and smoothness properties of this function, one can obtain $(1-1/e)$-approximation.
However, in our online setting, we cannot evaluate this function because $\tilde{H}_t$ will be revealed online, and hence we cannot simply run \textsc{RASCAL}.

To overcome the issue above, first we split the $T$ samples into \emph{mini-batches} of length $B$, which we specify later.
The key idea is to use the following objective function
\[
    \bar{H}_b(\bx) = \max_{\tau} \frac{1}{B} \sum_{t=(b-1)B + 1}^{bB} \tilde{H}_t(\bx, \tau)
\]
for each $b$-th mini-batch ($b = 1, \dots, T/B$).
We can see that $\bar{H}_b(\bx)$ is monotone, up-concave, and can be evaluated only using samples in the $b$-th mini-batch.
Then, we run a perturbed version of Frank-Wolfe algorithm~\citep{Golovin2014,Bian2017} on $\bar{H}_b$.
More formally, we first initialize $\bx_{b+1}^0 = \bfzero$ and for each $s = 0, \delta, 2\delta, \dots, 1-\delta$, we perform the update
\[
    \bx_{b}^{s + \delta} \gets \bx_{b}^s + \delta \bd_{b}^s
\]
where $\delta$ is the step size, $\bd_{b}^s$ is a solution to a perturbed linear optimization problem on $K$:
\[
    \bd_{b}^s \in \argmax_{\bd \in K} \inprod*{\lambda \sum_{b'=1}^{b} {\nabla \bar{H}_{b'}}(\bx_{b'}^s) + \br_{b}^s, \bd}.
 \]
Here, $\br_b^s \sim \caD_\mathrm{FPL}$ is a perturbation vector.
This perturbation trick aims to stabilize the algorithm so that we can maximize the true objective $\max_{\tau}\sum_{t=1}^T \tilde{H}_t(\cdot, \tau)$ using only mini-batch objectives.

In each iteration of continuous greedy, we need the gradients $\nabla \bar{H}_{b}(\bx_{b}^s)$, which in turn requires us to compute $\argmax_{\tau} \frac{1}{B}\sum_{t=(b-1)B + 1}^{bB} \tilde{H}_t(\bx_{b}^s, \tau)$.
These gradients and the optimal $\tau$ can be computed by \textsc{SmoothGrad} and \textsc{SmoothTau} subroutines, respectively, which were proposed in \citet{Wilder2018};
See Algorithms~\ref{alg:smooth-grad} and~\ref{alg:smooth-tau}.

Let us write $\bx_b := \bx_b^1$ for $b = 1, \dots, T/B$.
The final output of \textsc{StochasticRASCAL} is $\bx_{b'}$ for a random index $b'$ chosen uniformly at random from $\{1, 2, \dots, T/B\}$.
The pseudocode of \textsc{StochasticRASCAL} is presented in Algorithm~\ref{alg:stochastic-RASCAL}.

\begin{algorithm}[t]
    \caption{\textsc{StochasticRASCAL}}\label{alg:stochastic-RASCAL}
\begin{algorithmic}[1]
    \REQUIRE{learning rates $\lambda > 0$, step size $\delta > 0$, perturbation distribution $\caD_\mathrm{FPL}$, smoothing parameter $u>0$, and mini-batch size $B > 0$}
    \STATE Initialize $\bx_1 \in K$ arbitrary.
    \FOR{$b=1,\dots,T/B$}
    \STATE Observe samples $z_{(b-1)B+1}, \dots, z_{bB}$ and store them in mini-batch $\caZ_b$.
    \STATE \textcolor{gray}{/* continuous greedy */}
    \STATE Let $\bx_{b}^0 \gets \bfzero$.
    \FOR{$s = 0, \delta, 2\delta, \dots, 1-\delta$}
    \STATE $\tau := \textsc{SmoothTau}(\bx_{b}^s, u, \caZ_b)$
    \STATE $\bg_b^s := \textsc{SmoothGrad}(\bx_{b}^s, \tau, u, \caZ_b)$.
    \STATE Find a vertex $\bd_{b+1}^s$ of $K$ that maximizes $\inprod*{\lambda \sum_{b'=1}^b \bg_{b'}^s + \br_{b}^s, \bd}$ for $\bd \in K$, where $\br_b^s \sim \caD_\mathrm{FPL}$. \COMMENT{FPL}
    \STATE $\bx_{b}^{s + \delta} \gets \bx_{b}^s + \delta \bd_{b}^s$
    \ENDFOR
    \STATE Let $\bx_{b} = \bx_{b}^1$.\label{line:end-FW}
    \ENDFOR
    \RETURN $\bx_{b'}$ for $b'$ chosen from $\{1, 2, \dots, T/B\}$ uniformly at random.
\end{algorithmic}
\end{algorithm}

\begin{algorithm}[t]
    \caption{\textsc{SmoothGrad}$(\bx, \tau, u, \caZ)$}\label{alg:smooth-grad}
\begin{algorithmic}[1]
    \REQUIRE{$\bx$, $\tau$, $u$, and mini-batch $\caZ$}
    \STATE $I_z(\tau) := \max\{ \min\{ \frac{F(\bx; z) - \tau}{u}, 1 \}$ for $z \in \caZ$.
    \RETURN $\sum_{z \in \caZ} I_z(\tau)\nabla_\bx F(\bx, z)$
\end{algorithmic}
\end{algorithm}

\begin{algorithm}[t]
    \caption{\textsc{SmoothTau}$(\bx, u, \caZ)$}\label{alg:smooth-tau}
\begin{algorithmic}[1]
    \REQUIRE{$\bx$, $u$, and mini-batch $\caZ$}
    \STATE $\caB := \{F(\bx; z) \mid z \in \caZ \} \cup \{F(\bx; z) + u \mid z \in \caZ \}$
    \STATE Sort $\caB$ in ascending order, obtaining $\caB = \{b_1, \dots, b_{\abs{\caB}} \}$.
    \STATE $i^* := \min\{i: \sum_{z \in \caZ} I_z(b_i) < \alpha \abs{\caZ}\}$
    \STATE $A := \{z \in \caZ : b_{i^* - 1} < F(\bx; z) < b_{i^*} \}$
    \STATE $C := \{z \in \caZ : F(\bx; z) \leq b_{i^* - 1} \}$
    \RETURN the solution $\tau$ of the linear equation
    \[
        \sum_{z \in \caZ} \frac{F(\bx; z) - \tau}{u} + \abs{C} = \alpha\abs{\caZ}.
    \]
\end{algorithmic}
\end{algorithm}
\subsection{Convergence rate via regret bounds}
Let us consider the convergence rate of \textsc{StochasticRASCAL}.
The main challenge of the analysis is how to set the parameters used in the algorithm, i.e., learning rates $\lambda$, step size $\delta$, perturbation distribution $\caD_\mathrm{FPL}$, smoothing parameter $u$, and mini-batch size $B$, to achieve the desired $O(T^{-1/4})$ convergence rate.

To this end, using tools from \emph{online convex optimization}, we prove an approximate regret bound for a variant of \textsc{StochasticRASCAL} for adversarial online submodular CVaR learning (see Introduction for the definition).

\begin{theorem}[informal]\label{thm:Online-RASCAL-informal}
    There exists an efficient online algorithm for adversarial online CVaR learning with
    \begin{align*}
    \left(1-\frac{1}{e}\right)\sum_{t=1}^T H_t(\bx^*, \tau^*) - \E\left[\sum_{t=1}^T H_t(\bx_t, \tau_t) \right] &= O(T^{3/4})
    \end{align*}
    for an arbitrary $\bx^* \in K$ and $\tau^* \in [0,1]$, where the big-O notation hides a factor polynomial in $\alpha, \beta, D, G$, and $n$.
\end{theorem}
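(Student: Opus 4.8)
The plan is to recast the learner's task as \emph{online maximization of monotone up-concave functions} at the granularity of mini-batches, and then run a Follow-the-Perturbed-Leader (\textsc{FPL}) variant of the online continuous-greedy (Frank--Wolfe) scheme. Group the $T$ rounds into $M = T/B$ mini-batches and, for the $b$-th batch, work with
\[
  \bar H_b(\bx) = \max_\tau \frac1B\sum_{t=(b-1)B+1}^{bB}\tilde H_t(\bx,\tau),
\]
which by the two cited lemmas is monotone and up-concave, is $\beta_u$-smooth with $\beta_u = \tfrac1\alpha(\beta+LG/u)$, and satisfies $\bar H_b(\bfzero)\ge -O(u/\alpha)$ (take $\tau=0$ inside the max and use $F_t(\bfzero)\ge 0$). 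Crucially, $\bar H_b$ and its gradients depend only on the already-observed batch $\caZ_b$ and are evaluated via \textsc{SmoothGrad}/\textsc{SmoothTau}. I would maintain $1/\delta$ parallel \textsc{FPL} instances (one per greedy phase $s$): phase $s$ selects a vertex $\bd_b^s$ maximizing $\inprod{\lambda\sum_{b'<b}\bg_{b'}^s + \br_b^s, \bd}$ and then receives the loss vector $\bg_b^s = \nabla\bar H_b(\bx_b^s)$, while the trajectory $\bx_b^{s+\delta} = \bx_b^s + \delta\bd_b^s$ started at $\bfzero$ produces $\bx_b = \bx_b^1$.

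For each batch, up-concavity along the nonnegative direction $\bx^*\vee\bx_b^s - \bx_b^s$ together with monotonicity (so that $\nabla\bar H_b(\bx_b^s)\ge\bfzero$ and $\bar H_b(\bx^*\vee\bx_b^s)\ge\bar H_b(\bx^*)$) yields $\bar H_b(\bx^*) - \bar H_b(\bx_b^s) \le \inprod{\nabla\bar H_b(\bx_b^s),\bx^*}$. Combining this with $\beta_u$-smoothness of the update gives the contraction $a_{s+\delta}\le(1-\delta)a_s + \delta\inprod{\bg_b^s,\bx^*-\bd_b^s} + \tfrac{\beta_u D^2}2\delta^2$ for $a_s = \bar H_b(\bx^*)-\bar H_b(\bx_b^s)$. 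Unrolling over the $1/\delta$ phases, using $(1-\delta)^{1/\delta}\le e^{-1}$ and $a_0\le\bar H_b(\bx^*)+O(u/\alpha)$, then summing over $b$ and invoking the per-phase \textsc{FPL} regret $\E\sum_b\inprod{\bg_b^s,\bx^*-\bd_b^s}\le\regFPL$ (expectation over the perturbations $\br_b^s$), with $\delta\cdot(1/\delta)=1$, gives
\[
  \Bigl(1-\tfrac1e\Bigr)\sum_b\bar H_b(\bx^*) - \E\sum_b\bar H_b(\bx_b) \le \regFPL + \tfrac{\beta_u D^2}2\delta M + O(uM/\alpha).
\]

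Finally I would convert this mini-batch bound into the claimed per-sample CVaR regret. Since $\bar H_b(\bx^*)\ge\frac1B\sum_{t\in b}\tilde H_t(\bx^*,\tau^*)$ for the fixed comparator $\tau^*$, while reporting $\tau_t=\tau_b:=\argmax_\tau\frac1B\sum_{t\in b}\tilde H_t(\bx_b,\tau)$ (the \textsc{SmoothTau} output) makes $\bar H_b(\bx_b)=\frac1B\sum_{t\in b}\tilde H_t(\bx_b,\tau_b)$, multiplying by $B$ yields
\[
  \Bigl(1-\tfrac1e\Bigr)\sum_{t=1}^T\tilde H_t(\bx^*,\tau^*) - \E\sum_{t=1}^T\tilde H_t(\bx_t,\tau_t) \le B\regFPL + \tfrac{\beta_u D^2}2\delta T + O(uT/\alpha),
\]
and replacing $\tilde H_t$ by $H_t$ costs a further $O(uT(1+1/\alpha))$ by the smoothing lemma. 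With $\regFPL = O(\mathrm{poly}(n,D,G/\alpha)\sqrt M)$ and $M=T/B$, so $B\regFPL = O(\sqrt{BT}\cdot\mathrm{poly})$, and $\beta_u\delta T = O(\tfrac{LG}{\alpha u}\delta T)$, I would set $B=\sqrt T$, $\delta=T^{-1/2}$, and $u=T^{-1/4}$; then $\sqrt{BT}=T^{3/4}$, $\tfrac{LG}{\alpha u}\delta T = \tfrac{LG}\alpha T^{3/4}$, and $uT=T^{3/4}$, so every term is $O(T^{3/4})$ with constants polynomial in $\alpha,\beta,D,G,n$.

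The hard part will be twofold. First, justifying that the $\tau$-maximized envelope $\bar H_b$ is genuinely $\beta_u$-smooth, so that the Frank--Wolfe contraction applies with the gradient returned by \textsc{SmoothGrad}: one must control the gradient of $\max_\tau\tilde H_t(\cdot,\tau)$ through an envelope argument and confirm its Lipschitz constant is $O(\tfrac1\alpha(\beta+LG/u))$, which is precisely where the $1/u$ blow-up — and hence the smoothing/step-size trade-off driving the $T^{3/4}$ rate — originates. Second, the online bookkeeping: the $1/\delta$ \textsc{FPL} oracles are coupled, since phase $s$ feeds on the trajectory point $\bx_b^s$ produced by earlier phases, so I must argue each oracle still faces a legitimate online linear optimization instance (its loss $\bg_b^s$ depends only on the already-revealed batch) and that the per-phase regrets sum, in expectation over the perturbations, to $\regFPL$ after the $\delta\sum_s$ averaging. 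Handling the comparator's $\tau^*$ correctly — it survives only through the inequality $\bar H_b(\bx^*)\ge\frac1B\sum_{t\in b}\tilde H_t(\bx^*,\tau^*)$, which lets a single fixed $\tau^*$ pass through the batch maximization — is the conceptual crux that makes the reduction to monotone up-concave online maximization go through.
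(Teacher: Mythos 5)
Your $\bx$-side argument is essentially the paper's own proof of Theorem~\ref{thm:Online-RASCAL-smooth}: the mini-batch envelope $\bar H_b$, one FPL instance per continuous-greedy phase, the monotonicity/up-concavity/smoothness contraction, the comparator inequality $\bar H_b(\bx^*) \geq \frac{1}{B}\sum_{t=(b-1)B+1}^{bB}\tilde H_t(\bx^*,\tau^*)$, the smoothing conversion from $\tilde H_t$ to $H_t$, and the choices $B=\sqrt{T}$, $\delta=T^{-1/2}$, $u=T^{-1/4}$ all match. The genuine gap is in how you produce the \emph{played} values $\tau_t$. You set $\tau_t = \tau_b := \argmax_\tau \frac{1}{B}\sum_{t=(b-1)B+1}^{bB}\tilde H_t(\bx_b,\tau)$ so that $\bar H_b(\bx_b) = \frac{1}{B}\sum_{t=(b-1)B+1}^{bB}\tilde H_t(\bx_b,\tau_b)$ holds as an identity. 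But in adversarial online submodular CVaR learning the learner must commit to $(\bx_t,\tau_t)$ \emph{before} $F_t$ is revealed, and your $\tau_b$ is a function of all the functions in the current batch $b$ --- it is the within-batch hindsight-optimal $\tau$. Hence the procedure you describe is not an admissible online algorithm, and the chain of inequalities you write does not bound the regret of any legitimate learner. This cannot be waved away: against an adversary, whatever $\tau$ is committed at the start of a batch can be made to lose $\Omega(1)$ per round relative to that batch's optimal $\tau$ (the adversary simply shifts the values of $F_t$ after seeing the commitment), so the identity you rely on is not recoverable for free.

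The paper closes exactly this hole with a second online-learning layer: within each mini-batch it keeps $\bx_b$ fixed and runs online gradient descent on the one-dimensional concave functions $\tau \mapsto \tilde H_t(\bx_b,\tau)$, each $\tau_t$ depending only on already-revealed functions, which gives $\sum_{t=(b-1)B+1}^{bB}\tilde H_t(\bx_b,\tau_t) \geq B\,\bar H_b(\bx_b) - \regOGD(B)$ by the definition of OGD's $1$-regret. This replaces your identity by an inequality at the cost of an extra term $\frac{T}{B}\regOGD(B)$ in the final bound; since $\abs{\partial_\tau \tilde H_t} \leq C_\alpha := \max\{1,\frac{1}{\alpha}-1\}$ and the feasible interval $[0,1]$ has unit diameter, OGD with $\eta = \frac{1}{C_\alpha\sqrt{B}}$ gives $\regOGD(B) = O(C_\alpha\sqrt{B})$, so the extra term is $O(C_\alpha T/\sqrt{B}) = O(C_\alpha T^{3/4})$ at $B=\sqrt{T}$ and the claimed rate survives. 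So your outline becomes correct once the within-batch hindsight $\tau$ is replaced by this OGD component; everything else (envelope smoothness via Lemma~6 of Wilder et al., the per-phase FPL bookkeeping, the parameter balancing) is as in the paper.
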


We then show that the above regret bound can be used to show a convergence rate of \textsc{StochasticRASCAL}.
The technical detail of the adversarial setting and the proof of the following theorem is deferred to Appendix.

\begin{theorem}\label{thm:DR-stochastic}
    Under Assumption~\ref{asmp:DR}, \textsc{StochasticRASCAL} outputs $\bx \in K$ such that for any $\bx^* \in K$,
    \begin{align*}
        &\E[\CVaR_\caD(\bx)] \\
        &\geq \left(1-\frac{1}{e}\right)\CVaR_\caD(\bx^*) -
        O\left(\frac{\sqrt{C_\alpha GD}n^{1/8}}{\sqrt\alpha} \right)  T^{-1/4},
    \end{align*}
    where we set
    $B = \frac{\alpha C_\alpha \sqrt{T}}{DGn^{1/4}}$,
    $\delta = \frac{\alpha^{2}}{D^{2} \left((1+\alpha)G L \sqrt{T} + \alpha \beta T^{1/4} \right)}$,
    $\lambda = \frac{\alpha D n^{1/4}\sqrt{B/T}}{G}$,
    $u = \frac{T^{-1/4}}{(1+1/\alpha)}$,
    and
    $\caD_\mathrm{FPL} = \mathrm{Unif}({[0,1]}^n)$,
    and $C_\alpha := \max\{1, \frac{1}{\alpha} - 1\}$.
    Further, if $K$ is an integral polytope contained in $\{\bx \in {[0,1]}^n : \sum_i x_i = k\}$, then
    \begin{align*}
        &\E[\CVaR_\caD(\bx)] \\
        &\geq \left(1-\frac{1}{e}\right)\CVaR_\caD(\bx^*) -
        O\left( \frac{\sqrt{C_\alpha G_\infty} k^{3/4} \log^{1/4} n}{\sqrt\alpha} \right) T^{-1/4}
    \end{align*}
    for
    $B = \frac{\sqrt{2} C_{\alpha} \sqrt{T} \alpha}{2 G_\infty k^{3/2} \sqrt{\log{\left(n \right)}}}$,
    $\delta = \frac{\alpha^{2}}{D^{2} \left((1+\alpha)G L \sqrt{T} + \alpha \beta T^{1/4} \right)}$,
    $\lambda = \sqrt{\frac{B}{Tk}}$,
    $u = \frac{T^{-1/4}}{(1+1/\alpha)}$,
    and
    $\caD_\mathrm{FPL} = N(\bfzero, I)$.
\end{theorem}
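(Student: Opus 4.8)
The plan is to derive Theorem~\ref{thm:DR-stochastic} from the explicit form of the approximate regret bound of Theorem~\ref{thm:Online-RASCAL-informal} by an online-to-batch conversion that exploits the i.i.d.\ structure of $z_1,\dots,z_T$. The structural fact I would establish first is that the iterate $\bx_b=\bx_b^1$ is \emph{independent} of its own mini-batch $\caZ_b$: the continuous-greedy update $\bx_b^{s+\delta}\gets\bx_b^s+\delta\bd_b^s$ uses the direction $\bd_b^s$ that was produced while processing mini-batch $b-1$, i.e.\ from the gradients $\bg_1^s,\dots,\bg_{b-1}^s$ and the perturbation $\br_{b-1}^s$, so $\bx_b$ is a function of $\caZ_1,\dots,\caZ_{b-1}$ and the internal randomness alone. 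Combined with the variational identity $\CVaR_\caD(\bx)=\max_\tau\tau-\frac1\alpha\E_{z\sim\caD}{[\tau-F(\bx;z)]}_+$, this independence is precisely what turns a per-round guarantee on the realized functions $\bar H_b$ into a guarantee on the population CVaR of the uniformly random output $\bx_{b'}$.

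Next I would assemble the explicit regret bound and run the conversion. Viewing each mini-batch as one round of adversarial online submodular CVaR learning with objective $\bar H_b$, the analysis of the perturbed continuous greedy produces a $(1-1/e)$-approximate regret that splits into four contributions: (i) the smoothing error, at most $\frac{u(1+1/\alpha)}{2}$ per sample by the smoothing lemma, hence $O(uT(1+1/\alpha))$ in total; (ii) the Frank--Wolfe discretization error, governed by the step size $\delta$ and the $\frac1\alpha(\beta+\frac{LG}{u})$-Lipschitzness of $\nabla_\bx\tilde H_t$ from that same lemma; (iii) the Follow-the-Perturbed-Leader regret of the $1/\delta$ online linear optimization instances (one per continuous-greedy phase $s$), scaling as $\diam(K)/\lambda$ plus a term growing linearly in $\lambda$ and in the number of rounds $T/B$, with hidden constants fixed by the geometry of $K$ and the perturbation $\caD_\mathrm{FPL}$; and (iv) the error in the threshold variable $\tau$, which in the i.i.d.\ reduction becomes the expected gap between the batch-empirical $\max_\tau$ and the population $\max_\tau$, of order $O(1/\sqrt B)$ by one-dimensional uniform convergence (its adversarial counterpart being the online-gradient-descent regret $\regOGD$). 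Taking expectations, using the independence $\bx_b\perp\caZ_b$ to pass between empirical and population quantities on both the comparator and the algorithm side, and dividing by the number of samples, yields
\[
\E[\CVaR_\caD(\bx)]\ge\left(1-\frac1e\right)\CVaR_\caD(\bx^*)-O\left(u\left(1+\frac1\alpha\right)\right)-O(1/\sqrt B)-\E[\regret_{1-1/e}(T)]/T ,
\]
so that the convergence rate is the approximate regret divided by $T$.

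Finally I would balance the four terms over the parameters $u,\delta,\lambda,B$. Setting $u=\Theta(T^{-1/4}/(1+1/\alpha))$ makes the smoothing term $O(T^{-1/4})$; choosing $\delta$ as small as stated makes the discretization term lower-order; and the remaining tension is between the FPL term (iii) and the $\tau$-concentration term (iv), mediated by the mini-batch size $B$, which is exactly where the two regimes of the theorem split. The main obstacle, and the step demanding the most care, is term (iii): its dependence on the geometry of $K$ and on the choice of $\caD_\mathrm{FPL}$ is what produces the two different constants. For a general down-closed $K$ I would take $\caD_\mathrm{FPL}=\mathrm{Unif}({[0,1]}^n)$ and bound the FPL regret in Euclidean geometry; optimizing $B=\Theta(\alpha C_\alpha\sqrt T/(DGn^{1/4}))$ and the matching $\lambda$ then yields the $\sqrt{C_\alpha GD}\,n^{1/8}/\sqrt\alpha$ factor. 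When $K$ is an integral polytope inside $\{\sum_i x_i=k\}$, I would instead use the Gaussian perturbation $\caD_\mathrm{FPL}=N(\bfzero,I)$ and exploit that every vertex is a $0/1$ vector with exactly $k$ ones, so the FPL regret incurs only a $\sqrt{\log n}$ dimensional factor and the relevant gradient magnitude is the $\ell^\infty$ bound $G_\infty$; optimizing $B$ and $\lambda$ accordingly gives the improved $\sqrt{C_\alpha G_\infty}\,k^{3/4}\log^{1/4}n/\sqrt\alpha$ factor. Verifying that the closed-form choices of $B,\delta,\lambda,u$ simultaneously minimize the sum of all four terms, and that the inner-$\max_\tau$ bias contributes only at the $O(1/\sqrt B)$ scale, are the routine but delicate calculations I expect to dominate the write-up.
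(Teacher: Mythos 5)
Your proposal is correct and follows essentially the same route as the paper: the paper first proves the explicit four-term approximate-regret bound (Theorem~\ref{thm:DR}: the FPL term $B\delta\sum_{s}\regFPL(T/B)$, the OGD term $\frac{T}{B}\regOGD(B)$, the discretization term $\delta\beta_u D^2T$, and the smoothing term $uT(1+1/\alpha)$), then applies the online-to-batch conversion of Lemma~\ref{lem:online-to-batch}, whose proof rests exactly on the independence fact you isolate ($\bx_t,\tau_t$ is independent of $z_t$ because the continuous-greedy directions for mini-batch $b$ are produced from mini-batches $1,\dots,b-1$), and finally plugs in the same parameter choices and the same two FPL regimes (uniform perturbation with the $\eta TG^2+\sqrt{n}D^2/\eta$ bound for general $K$, Gaussian perturbation with the $k\sqrt{2\log n}(\eta kT+1/\eta)$ bound for integral polytopes in $\{\bx : \sum_i x_i=k\}$). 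The one place you deviate is your term (iv): the paper keeps the threshold variable inside the regret framework --- it observes that although \textsc{StochasticRASCAL} never actually updates $\tau_t$, ``only the existence of $\tau_t$ generated by OGD is required,'' so the bound $\frac{T}{B}\regOGD(B)=O(C_\alpha T/\sqrt{B})$ is applied to a virtual OGD sequence and the conversion uses $\CVaR_\caD(\bx_t)=\max_\tau H(\bx_t,\tau)\geq H(\bx_t,\tau_t)$ --- whereas you replace this with one-sided uniform convergence over the one-dimensional parameter $\tau$, bounding the expected gap between the batch-empirical maximum and the population maximum by $O(C_\alpha/\sqrt{B})$. Both devices contribute $O(C_\alpha T/\sqrt{B})$ in total, so the balancing of $B$ against the FPL term and the resulting constants $\frac{\sqrt{C_\alpha GD}\,n^{1/8}}{\sqrt{\alpha}}$ and $\frac{\sqrt{C_\alpha G_\infty}\,k^{3/4}\log^{1/4}n}{\sqrt{\alpha}}$ are unaffected; your variant has the aesthetic advantage of arguing directly about the quantity \textsc{SmoothTau} actually computes, at the cost of an extra (standard, chaining-type) concentration argument that the paper's virtual-OGD device avoids.
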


To achieve $\E[\CVaR_\caD(\bx)] \geq  \left(1-\frac{1}{e}\right)\CVaR_\caD(\bx^*) - \eps$ for a desired error $\eps > 0$, \textsc{StochasticRASCAL} requires $O(\frac{D^2G^2\sqrt{n}}{\eps^4})$ samples and $O(\frac{DGn^{1.25}}{\eps^2})$ space, whereas \textsc{RASCAL}~\citep{Wilder2018} requires $O(\frac{n}{\eps^2})$ samples and $O(\frac{n^2}{\eps^2})$ space.
Our algorithm runs in a smaller space when the parameters are of moderate size.
For example, if $\beta = D = O(1)$ and $L = G = o(n^{1/4})$, the space complexity of \textsc{StochasticRASCAL} is better than that of \textsc{RASCAL}.

\section{CVaR Maximization of Discrete Submodular Functions}\label{sec:matroid}
We now present our online algorithm for a monotone submodular set function and a matroid constraint.
Let $f_t : 2^V \to [0,1]$ be a monotone submodular function corresponding to the $t$-th sample and $F_t$ be its multilinear extension for $t = 1, \dots, T$.

The basic idea is to run \textsc{StochasticRASCAL} on the multilinear extensions $F_t$ and the matroid polytope $K$.
However, we must address several technical obstacles.
First, we must compare the output portfolio with the optimal \emph{portfolio};
the error bound in the previous sections compared it with the optimal \emph{solution}.
To this end, we make multiple copies of variables so that we can approximate an optimal portfolio by a uniform distribution over a multiset of feasible solutions.
More precisely, we define a continuous DR-submodular function $\bar{F}_t : K^r \to [0,1]$ by
\[
    \bar{F}_t(\bx^1, \dots, \bx^r)
    = \frac{1}{r}\sum_{i=1}^r F_t(\bx^i)
\]
for some sufficiently large $r$.
Then, we feed $\bar{F}_1, \dots, \bar{F}_T$ to \textsc{StochasticRASCAL}.
Suppose that we obtain $(\bx_b^1, \dots, \bx_b^r)$ at Line~\ref{line:end-FW} for each mini-batch $b=1, \dots, T/B$.
Abusing the notation, let us denote $(\bx_t^1, \dots, \bx_t^r) := (\bx_b^1, \dots, \bx_b^r)$ when the $t$-th sample is in the $b$-th mini-batch.

Next, we need to convert $\bx_t^1, \dots, \bx_t^r$ to feasible sets without significantly deteriorating the values of the multilinear extensions.
To this end, we independently apply \emph{randomized swap rounding}~\citep{Chekuri2010} $q$ times to each $\bx^i$ to obtain feasible sets $X_t^{i, 1}, \dots, X_t^{i, q}$.
Note that randomized swap rounding is oblivious rounding and independent from $F_t$.
We can show that $\frac{1}{q}\sum_{j=1}^q f_t(X_t^{i,j})$ is close to $F_t(\bx_t^i)$ by using a concentration inequality.
Finally, after $T$ rounds, we return the uniform portfolio over all $X_t^{i,q}$.
The pseudocode is given in Algorithm~\ref{alg:matroid}.
Carefully choosing $r$ and $q$, we obtain the following theorem.

\begin{algorithm}
    \caption{Online algorithm for maximizing a monotone submodular set function subject to a matroid constraint.}\label{alg:matroid}
    \begin{algorithmic}[1]
        \STATE Run \textsc{StochasticRASCAL} for $\bar{F_1}, \dots, \bar{F_T}$ and the matroid polytope $K$ and let $(\bx_t^1, \dots, \bx_t^r)$ be the temporary solution at Line~\ref{line:end-FW} in \textsc{StochasticRASCAL} for $t=1,\dots,T$.
        \STATE $X_t^{i,j} \gets \textsc{RandomizedSwapRounding}(\bx_t^i)$ for $t=1,\dots, T$, $i = 1, \dots, r$, and $j = 1,\dots, q$.
        \RETURN Uniform portfolio $\bar{\caP}$ over all $X_t^{i,j}$.
    \end{algorithmic}
\end{algorithm}

\begin{theorem}\label{thm:matroid}
    Algorithm~\ref{alg:matroid} achieves
    \begin{align*}
        &\E [\CVaR(\bar\caP)] \\
        &\geq \left(1- \frac{1}{e} \right)\CVaR_\caD(\caP^*) -  O(k^{3/4}\log^{1/4}(n) T^{-1/10})
    \end{align*}
    for arbitrary portfolio $\caP^*$,
    where we set $r=\tilde{O}(T^{1/5})$ and $q = \tilde{O}(T^{3/4})$
    and the expectation is taken over $z_1, \dots, z_T$ and the randomness of the algorithm.
\end{theorem}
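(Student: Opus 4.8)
The plan is to reduce the portfolio problem to the continuous CVaR maximization already handled by Theorem~\ref{thm:DR-stochastic}, run on the lifted objectives $\bar F_t$ over $K^r$, and then separately account for the two gaps the reduction opens up: (i) approximating an arbitrary portfolio $\caP^*$ by a uniform mixture of $r$ fractional points, and (ii) rounding the fractional solution back to feasible sets. I will also need one structural fact, namely that $g \mapsto \CVaR_\caD(g)$ is concave in the underlying value function $g(\cdot)$; this follows from the variational formula by feeding the convex combination of the individual maximizers $\tau_b^*$ into $\max_\tau \tau - \frac1\alpha\E_z[\tau-g(z)]_+$ and using convexity of $[\,\cdot\,]_+$.

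First I would prove a portfolio-approximation lemma. Given any portfolio $\caP^*$, draw $r$ i.i.d.\ feasible sets $X^1,\dots,X^r\sim\caP^*$ and set $\bx^{*,i}=\mathbf{1}_{X^i}$. Writing $h(z)=\E_{X\sim\caP^*}f(X;z)$, the lifted value $\bar F(\bx^{*,1},\dots,\bx^{*,r};z)=\frac1r\sum_i f(X^i;z)$ is an unbiased estimate of $h(z)$ with variance $O(1/r)$. Plugging the maximizer $\tau^*$ of $\CVaR_\caD(\caP^*)$ into the variational formula and using that $[\tau^*-\,\cdot\,]_+$ is $1$-Lipschitz gives $\E\,\CVaR_\caD(\bar F(\bx^{*,1},\dots,\bx^{*,r}))\ge\CVaR_\caD(\caP^*)-O(1/(\alpha\sqrt r))$, so by the probabilistic method a point of $K^r$ with this guarantee exists.

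Next I would feed $\bar F_1,\dots,\bar F_T$ to \textsc{StochasticRASCAL} and invoke the integral-polytope case of Theorem~\ref{thm:DR-stochastic}, translating the parameters to the lift: the dimension becomes $nr$, the rank (coordinate sum of the base polytope $K^r$) becomes $rk$, the diameter scales like $\sqrt r\,D$, and the gradient bounds rescale accordingly. This makes the approximate gap of Theorem~\ref{thm:DR-stochastic} pick up a polynomial factor in $r$ on top of $k^{3/4}\log^{1/4}(n)\,T^{-1/4}$; balancing that against the $O(1/(\alpha\sqrt r))$ portfolio-approximation error is exactly what pins $r=\tilde O(T^{1/5})$ and yields the $T^{-1/10}$ rate. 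One subtlety is that Algorithm~\ref{alg:matroid} averages the Line~\ref{line:end-FW} solutions over all batches rather than outputting a single random batch; here I would invoke the concavity of $\CVaR_\caD$ to show that its value on the batch mixture is at least the average of the per-batch values, so the random-index guarantee of Theorem~\ref{thm:DR-stochastic} transfers to the mixture actually used.

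Finally I would control the rounding. Randomized swap rounding is oblivious and satisfies $\E[f(X_t^{i,j};z)]\ge F(\bx_t^i;z)$ together with a Chernoff-type concentration of $f(X_t^{i,j};z)$ about $F(\bx_t^i;z)$. Writing $\Phi(z)=\E_{X\sim\bar\caP}f(X;z)$ for the rounded portfolio value and $\Psi(z)$ for the fractional mixture value, I would transfer CVaR from $\Psi$ to $\Phi$ using that $\CVaR_\caD$ is $\frac1\alpha$-Lipschitz in the sup-norm of its value function, and drive $\sup_z|\Phi(z)-\Psi(z)|$ below the target accuracy by averaging over the $q$ independent roundings and union-bounding over the $T$ samples and $r$ blocks; this is where $q=\tilde O(T^{3/4})$ enters, making the rounding term negligible next to $T^{-1/10}$. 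Chaining the three bounds, $\E[\CVaR_\caD(\bar\caP)] \ge \CVaR_\caD(\Psi)-(\text{rounding}) \ge (1-1/e)\CVaR_\caD(\bar F(\bx^*))-(\text{gap}) \ge (1-1/e)\CVaR_\caD(\caP^*)-(\text{portfolio})-(\text{gap})$, gives the theorem. I expect the rounding step to be the main obstacle: because CVaR is governed by the worst $\alpha$-tail of the scenario distribution, an in-expectation or per-scenario rounding guarantee is not obviously sufficient, and keeping the rounded portfolio value uniformly close to the fractional value across scenarios — precisely the reason the swap-rounding concentration and a polynomially large $q$ are needed — is the delicate part.
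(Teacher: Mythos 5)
Your overall architecture (lift to $K^r$, a portfolio-approximation lemma with $r=\tilde O(T^{1/5})$, swap rounding repeated $q=\tilde O(T^{3/4})$ times, and balancing to get $T^{-1/10}$) matches the paper's, and both your probabilistic-method proof of the portfolio-approximation step and your concavity argument for the batch mixture are sound (the paper glosses over the latter, so making it explicit is a genuine improvement). The gap is in the rounding transfer, exactly the step you flagged as delicate. You transfer from the fractional mixture $\Psi$ to the rounded portfolio $\Phi$ at the level of $\CVaR_\caD$, i.e., with respect to a fresh scenario $z\sim\caD$, and you propose to control $\sup_z\abs{\Phi(z)-\Psi(z)}$ by union-bounding ``over the $T$ samples and $r$ blocks.'' But those $T$ samples are the observed $z_1,\dots,z_T$, not the support of $\caD$: Lemma~\ref{lem:rounding} is a high-probability statement over the rounding randomness for \emph{one fixed} submodular function, $\caD$ may have infinite support (the paper stresses its algorithms work when $N$ is infinite), so no union bound over scenarios is available, and closeness at the observed samples says nothing about $\Phi(z)$ at unobserved $z$. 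As written, the link $\E[\CVaR_\caD(\bar\caP)]\ge \CVaR_\caD(\Psi)-(\text{rounding error})$ does not follow.

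The paper sidesteps this by ordering the steps differently: it compares rounded and fractional solutions only at the observed samples, via $H_t(\caP_t,\tau_t)\ge H_t(\bx_t^1,\dots,\bx_t^r,\tau_t)-\eps$ for each $t$ (only $rT$ rounding events, union-bounded with $\delta=O(1/(rT))$), and \emph{then} applies the online-to-batch conversion; since $z_t\sim\caD$ is independent of $(\caP_t,\tau_t)$ (the fractional point comes from earlier mini-batches and the rounding randomness is independent of $z_t$), taking the expectation over $z_t$ converts the sample-level comparison into a population-level one for free. Your ordering can also be repaired, but not with a sup-norm bound: by obliviousness of swap rounding, for \emph{every fixed} $z$ one has $\E_{\mathrm{round}}\bigl[\Psi(z)-\Phi(z)\bigr]_+\le \eps + rT\delta$, so Fubini gives $\E_{\mathrm{round}}\E_{z\sim\caD}\bigl[\Psi(z)-\Phi(z)\bigr]_+\le \eps + rT\delta$, and then, plugging the maximizer $\tau^*$ for $\Psi$ into the variational formula and using $[a+b]_+\le[a]_++[b]_+$, one gets $\CVaR_\caD(\Phi)\ge \CVaR_\caD(\Psi)-\frac{1}{\alpha}\E_{z}\bigl[\Psi(z)-\Phi(z)\bigr]_+$. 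In other words, CVaR is $\tfrac1\alpha$-Lipschitz in a one-sided $L^1(\caD)$ sense, and that---not uniform closeness across scenarios---is what the rounding guarantee can actually deliver.
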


\section{Experiments}\label{sec:experiments}

\begin{figure*}[t!]
\centering
\subfloat[NetScience]{\includegraphics[width=.31\hsize]{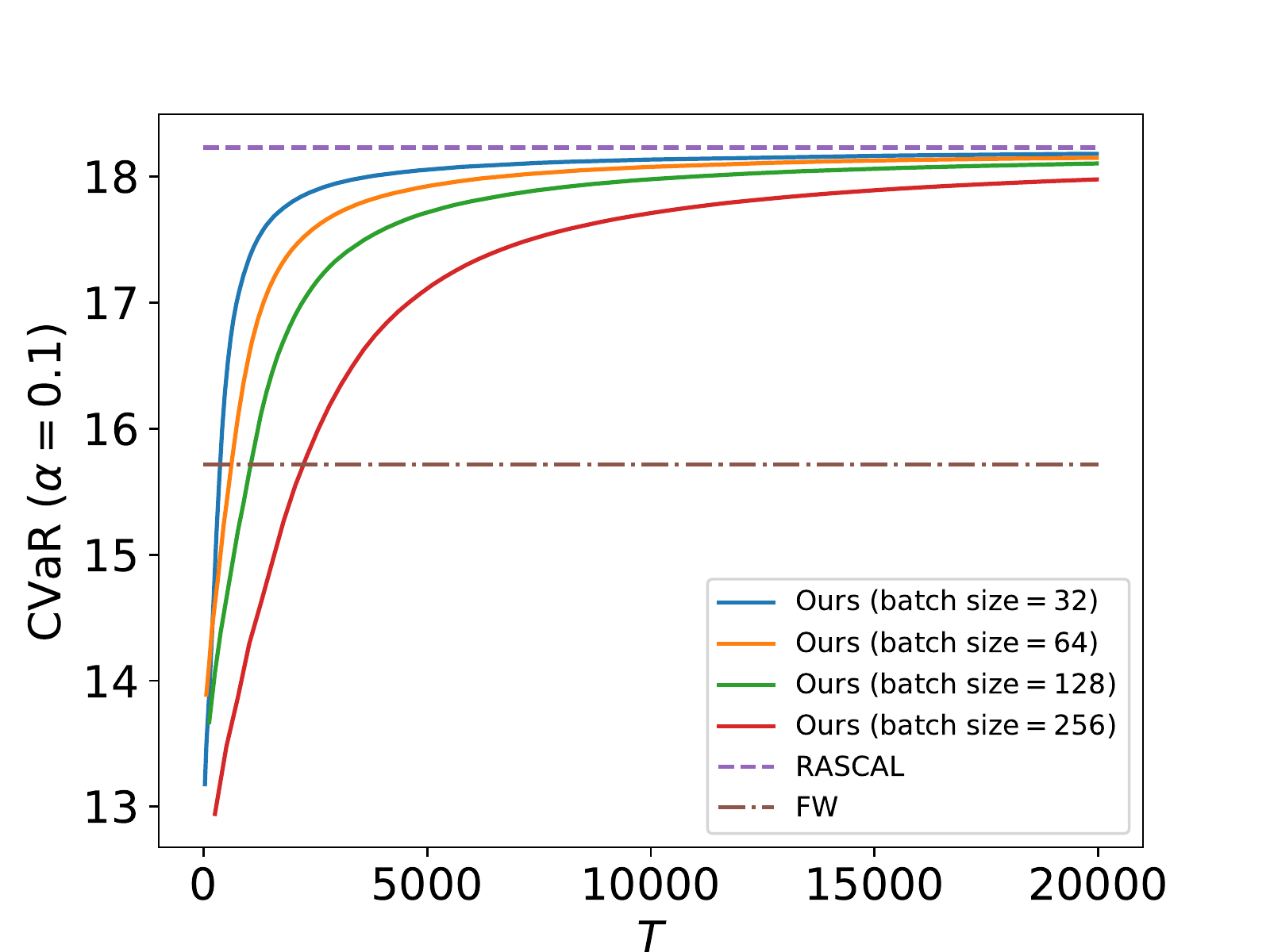}}
\subfloat[EuroRoad]{\includegraphics[width=.31\hsize]{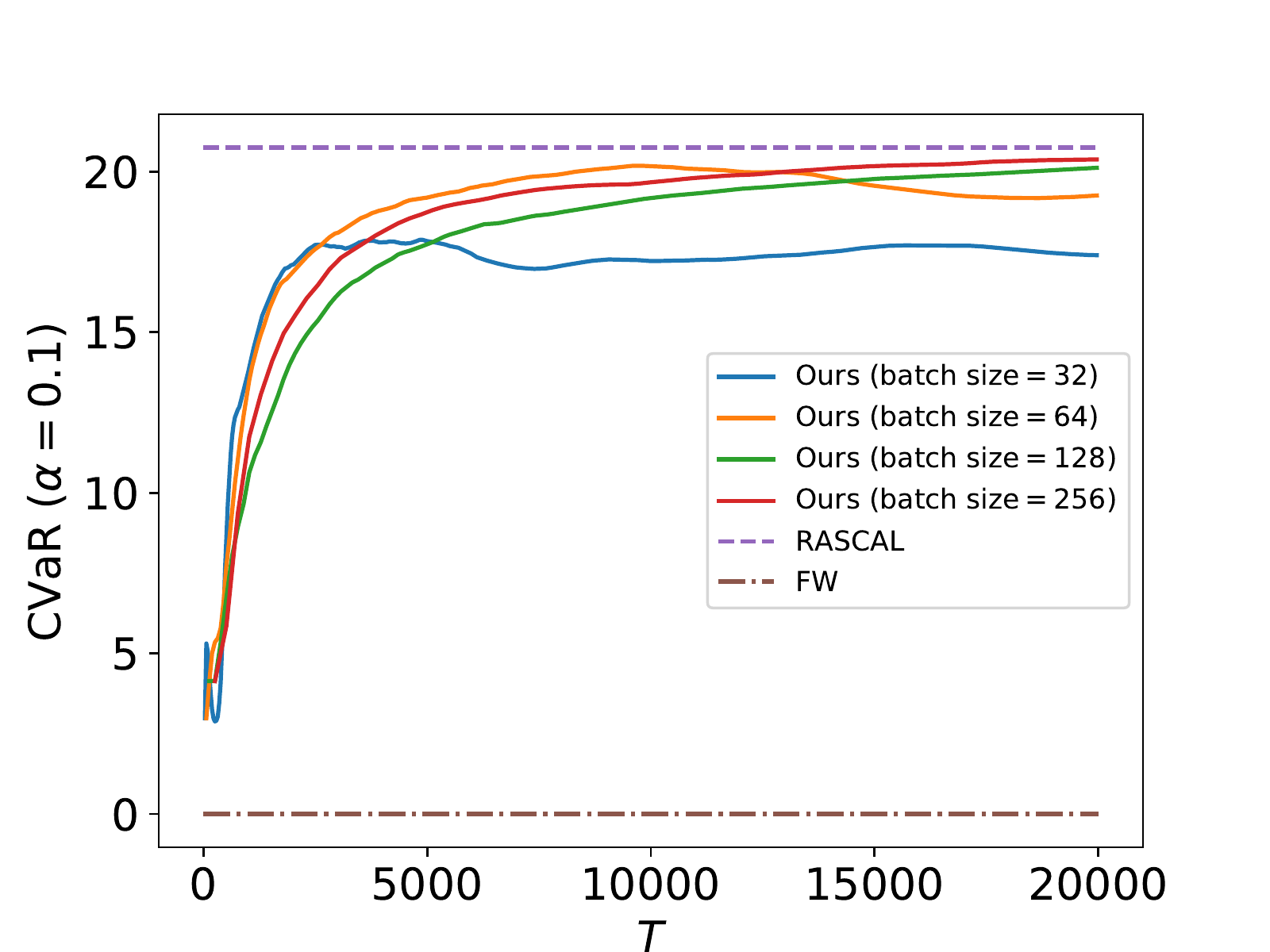}}
\subfloat[BWSN]{\includegraphics[width=.31\hsize]{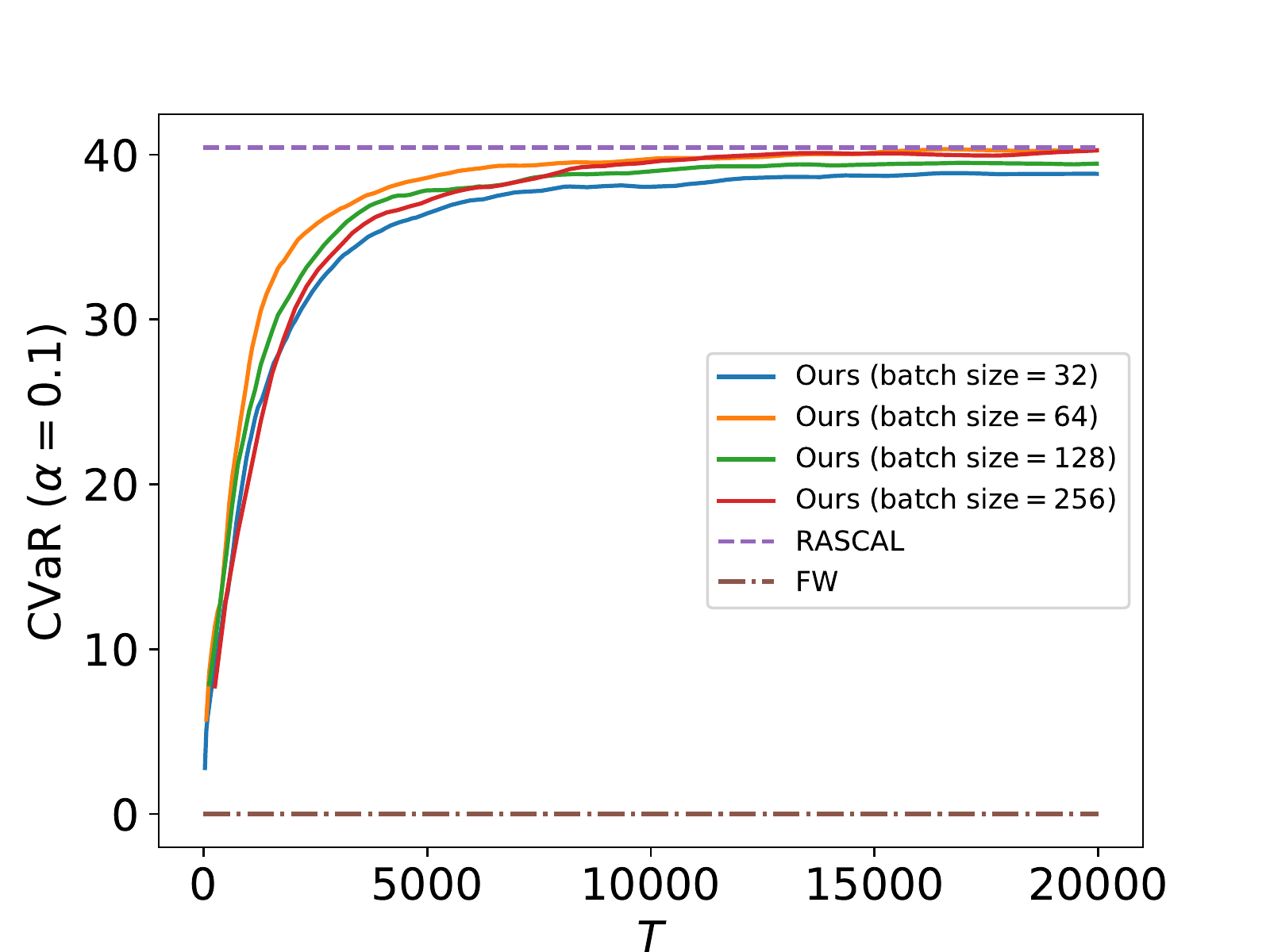}}
\caption{CVaR and the number of samples $T$}\label{fig:T}
\end{figure*}
\begin{figure*}[t!]
\centering
\subfloat[NetScience]{\includegraphics[width=.31\hsize]{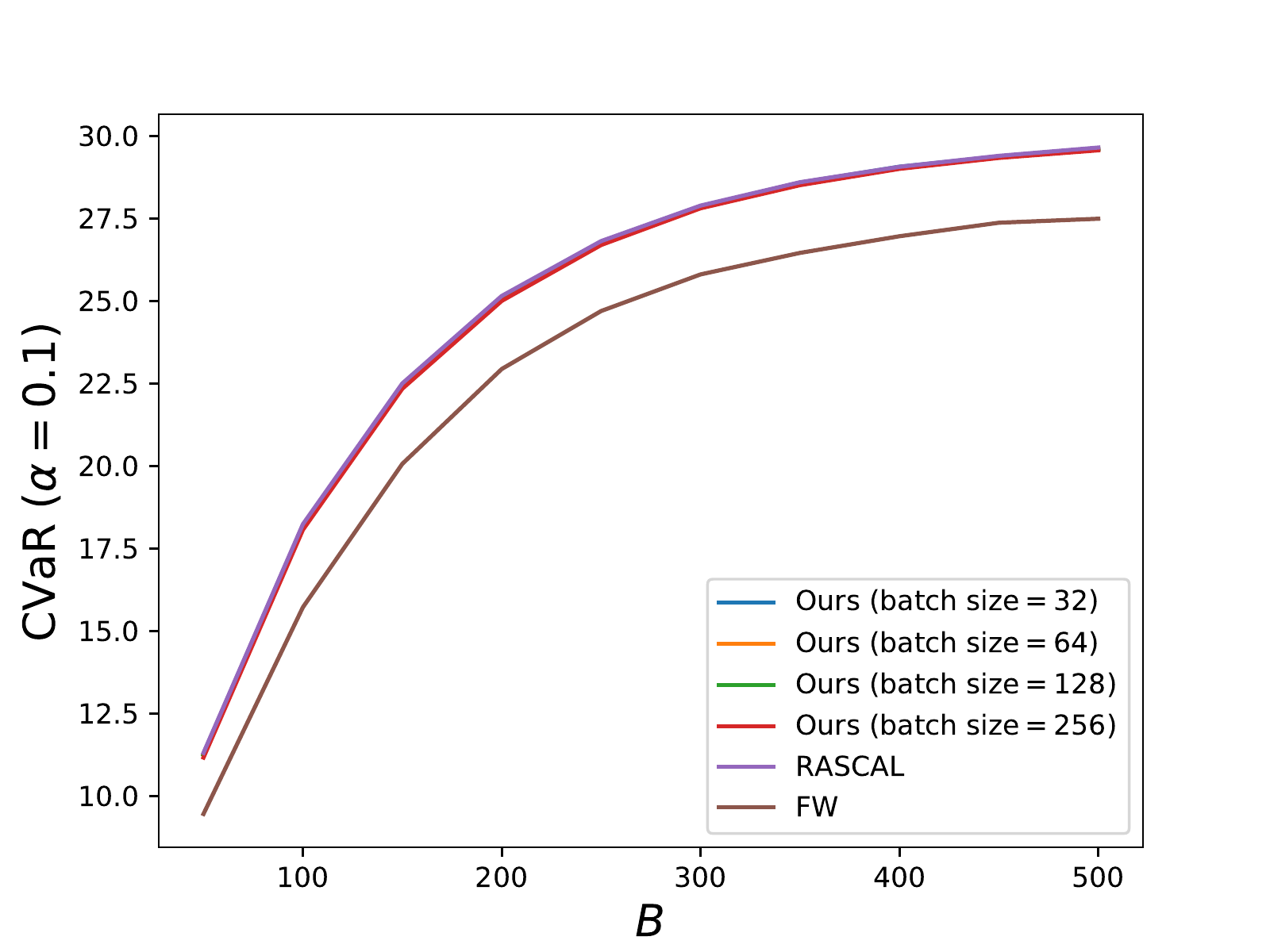}}
\subfloat[EuroRoad]{\includegraphics[width=.31\hsize]{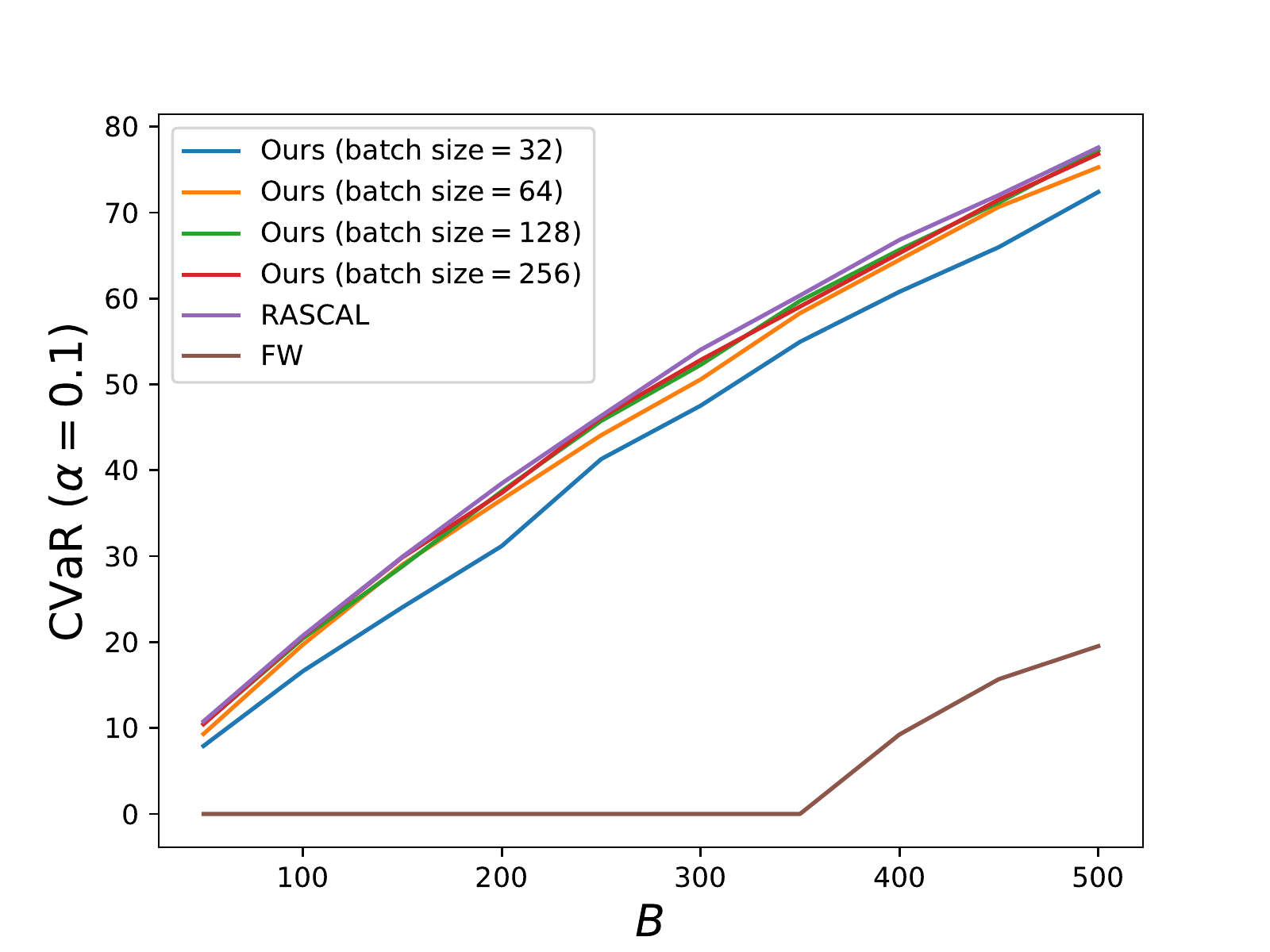}}
\subfloat[BWSN]{\includegraphics[width=.31\hsize]{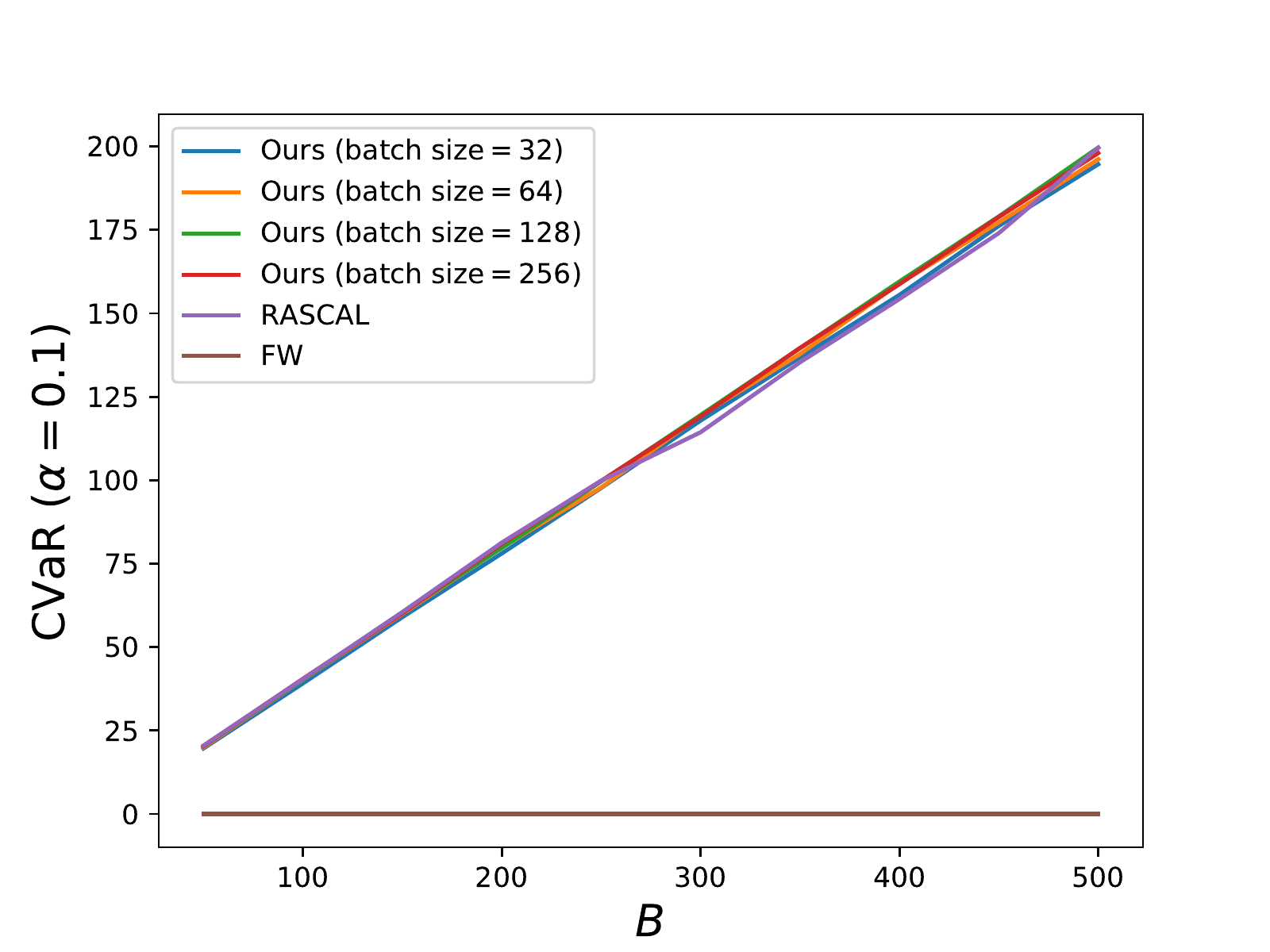}}
\caption{CVaR and budget $B$}\label{fig:B}
\end{figure*}

In this section, we show our experimental results.
In all the experiments, the parameter $\alpha$ of CVaR was set to $0.1$.
The experiments were conducted on a Linux server with
Intel Xeon Gold 6242 (2.8GHz) and 384GB of main memory.

\paragraph{Problem Description.}
We conducted experiments on the sensor resource allocation problem, in which the goal is to rapidly detect a contagion spreading through a network using a limited budget~\citep{Bian2017,leskovec2007cost,soma2015generalization}.

Here, we follow the configuration of the experiments conducted in~\cite{Wilder2018}.
Let $G = (V, E)$ be a graph on $n$ vertices.
A contagion starts at a random vertex and spreads over time according to some specific stochastic process.
Let $z_v$ be the time at which the contagion reaches $v$, and let $z_{\max} = \max_{v \in V : z_v < \infty}z_v$.
If $z_v = \infty$ for some vertex $v \in V$, that is, the contagion does not reach $v$, we reassign $z_v = z_{\max}$, as described in~\cite{Wilder2018}.

The decision maker has a budget $B$ (e.g., energy) to spend on sensing resources.
Let $x_v$ represent the amount of energy allocated to the sensor at a vertex $v$.
When contagion reaches $v$ at time $z_v$, the sensor detects it with probability $1-{(1-p)}^{x_v}$, where $p \in [0,1]$ is the probability that detects the contagion per unit of energy.
The objective $F$ on vectors $\bx = {(x_v)}_{v \in V}$ and $\bz = {(z_v)}_{v \in V}$ is the expected amount of detection time that is saved by the sensor placements:
\[
  F(\bx; \bz) = z_{\max} - \sum_{i=1}^n z_{v_i}(1-{(1-p)}^{x_{v_i}})\prod_{j<i}{(1-p)}^{x_{v_j}},
\]
where the vertices are ordered so that $z_{v_1} \leq z_{v_2} \leq \cdots \leq z_{v_n}$.
It is known that the function $F$ is DR-submodular~\citep{Bian2017}.

\paragraph{Datasets.}
We consider two sensing models and generated three datasets.
In all of them, the source vertex $s$ is chosen uniformly at random.

The first model is the continuous time independent cascade model (CTIC).
In this model, each edge $uv \in E$ has propagation time $\rho_{uv}$ drawn from an exponential distribution with mean $\lambda$.
The contagion starts at the source vertex $s$, i.e., $z_s = 0$, and we iteratively set $z_v = \min_{u\in N(v)} z_u + \rho_{uv}$, where $N(v)$ is the set of neighbors of $v$.
Note that $z_v$ is the first time that the contagion reaches $v$ from its neighbor.
We generated datasets using two real-world networks\footnote{\url{http://konect.cc}}: NetScience, a collaboration network of 1,461 network scientists, and EuroRoad, a network of 1,174 European cities and the roads between them.
For both networks, we set $\lambda = 5$ and $p = 0.01$, and we generated 1,000 scenarios.

The second model, known as the Battle of Water Sensor Networks (BWSN)~\citep{ostfeld2008battle}, involves contamination detection in a water network.
BWSM simulates the spread of contamination through a 126-vertex water network consisting of junctions, tanks, pumps, and the links between them, and the $z_v$ values are provided by a simulator.
We set $p = 0.001$ and generated 1,000 scenarios.

\paragraph{Methods.}
We compared our method against two offline algorithms, \textsc{RASCAL}~\citep{Wilder2018} and the Frank--Wolfe (FW) algorithm~\citep{Bian2017}.
We note that the latter algorithm is designed to maximize the expectation of a DR-submodular function instead of its CVaR.
We run those offline methods on the generated 1,000 scenarios for each dataset.
As our method is an online algorithm, we run our method on 20,000 samples in an online manner, where each sample was uniformly drawn from the set of generated scenarios.

\paragraph{Results.}
Figure~\ref{fig:T} shows how the CVaR changes as $T$ increases.
For each dataset, as long as the batch size is not excessively small, the CVaR attained by our method approaches to that attained by \textsc{RASCAL}\@.
FW algorithm showed significantly lower performance because it is not designed to maximize CVaR.

Figure~\ref{fig:B} shows how the CVaR changes as the budget $B$ increases.
For our method, we plotted the CVaR after processing $10,000$ samples.
We can again confirm that the CVaR attained by our method is close to that attained by \textsc{RASCAL}\@.

\section{Conclusion}
We devised \textsc{StochasticRASCAL} for maximizing CVaR of a monotone stochastic submodular function.
We showed that \textsc{StochasticRASCAL} finds a ($1-1/e$)-approximate solution with a convergence rate of $O(T^{-1/4})$ for monotone continuous DR-submodular functions.
We extended it to portfolio optimization for monotone submodular set functions under a matroid constraint.
Experiments using CTIC and BWSN datasets demonstrated that our algorithm can rapidly achieve CVaRs that are comparable to RASCAL.

\section*{Acknowledgments}
T.S.~is supported by JST, ERATO, Grant Number JPMJER1903, Japan.
Y.Y.~is supported in part by JSPS KAKENHI Grant Number 18H05291 and 20H05965.

\newpage
\bibliographystyle{named}
\bibliography{abbrv.bib}

\onecolumn
\appendix
\section{Adversarial Setting}\label{sec:adversarial}
In this section, we present online algorithm for CVaR maximization in an adversarial environment.

\subsection{Preliminaries on Online Convex Optimization}
We use the framework called \emph{online convex optimization (OCO)} extensively, and will briefly explain it below.
For details, the reader is referred to a monograph~\citep{Hazan2016OCO}.
In OCO, the learner is given a compact convex set $K \subseteq \R^n$.
For each round $t=1,\dots, T$, the learner must select $\bx_t \in K$ and then the adversary reveals a concave reward function\footnote{Although OCO is usually formulated as convex minimization, we state OCO in the form of concave maximization for later use. Note that we can convert minimization to maximization by negating the objective function.} $f_t : K \to [0,1]$ to the learner.
The goal of the learner is to minimize the (1-)regret
\begin{align*}
    \regret(T) = \max_{\bx^* \in K} \sum_{t=1}^T f_T(\bx^*) - \sum_{t=1}^T f_t(\bx_t).
\end{align*}
An important subclass of OCO is \emph{online linear optimization (OLO)}, in which the objective functions are linear.

We use the following OCO algorithms.
Let $\eta > 0$ be a learning rate.

\paragraph{Online Gradient Descent (OGD)}
\[
    \bx_{t+1} = \proj_K (\bx_t + \eta\nabla f_t(\bx_t))
\]

\paragraph{Follow the Perturbed Leader (FPL)}
\[
    \bx_{t+1} = \argmax_{\bx \in K} \inprod*{\eta\sum_{t'=1}^t \nabla f_{t'}(\bx_{t'}) + \br_t, \bx}
\]
where $\br_t$ is a perturbation term drawn from some distribution $\caD_\mathrm{FPL}$.

\begin{lemma}[\citet{Hazan2016OCO}]
    If $\norm{\nabla f_t} \leq G$ and the diameter of $K$ is $D$, then the OGD sequence $\bx_t$ satisfies
    \[
        \regret(T) \leq \eta TG^2 + \frac{D^2}{\eta}.
    \]
\end{lemma}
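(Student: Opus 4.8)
The plan is to follow the standard one-step potential analysis of online gradient ascent, since the claimed bound is exactly the textbook guarantee; in fact the derivation yields the sharper constant $\tfrac{1}{2}$ on both terms, from which the stated (looser) bound follows immediately. Write $\bg_t = \nabla f_t(\bx_t)$ and fix an arbitrary comparator $\bx^* \in K$, so that it suffices to bound $\sum_{t=1}^T \bigl(f_t(\bx^*) - f_t(\bx_t)\bigr)$ uniformly in $\bx^*$. The first move is to linearize: since each reward $f_t$ is concave, the first-order inequality gives $f_t(\bx^*) - f_t(\bx_t) \leq \inprod{\bg_t, \bx^* - \bx_t}$, so it remains to control the linearized regret $\sum_t \inprod{\bg_t, \bx^* - \bx_t}$.

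The core step is to track the squared distance $\norm{\bx_t - \bx^*}^2$ to the comparator as a potential. Using the non-expansiveness of the Euclidean projection onto the convex set $K$ (which contains $\bx^*$), the OGD update $\bx_{t+1} = \proj_K(\bx_t + \eta \bg_t)$ satisfies
\[
\norm{\bx_{t+1} - \bx^*}^2 \leq \norm{\bx_t + \eta \bg_t - \bx^*}^2 = \norm{\bx_t - \bx^*}^2 + 2\eta \inprod{\bg_t, \bx_t - \bx^*} + \eta^2 \norm{\bg_t}^2 .
\]
Rearranging isolates the per-round inner product,
\[
\inprod{\bg_t, \bx^* - \bx_t} \leq \frac{\norm{\bx_t - \bx^*}^2 - \norm{\bx_{t+1} - \bx^*}^2}{2\eta} + \frac{\eta}{2}\norm{\bg_t}^2 .
\]

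Summing over $t = 1, \dots, T$ telescopes the distance terms, leaving $\frac{\norm{\bx_1 - \bx^*}^2 - \norm{\bx_{T+1} - \bx^*}^2}{2\eta} \leq \frac{D^2}{2\eta}$ after bounding the initial gap by the diameter $D$ and discarding the nonnegative final term. Bounding each gradient by $\norm{\bg_t} \leq G$ controls the second sum by $\frac{\eta T G^2}{2}$. Combining and using concavity yields $\regret(T) \leq \frac{D^2}{2\eta} + \frac{\eta T G^2}{2} \leq \frac{D^2}{\eta} + \eta T G^2$; since $\bx^*$ was arbitrary, the bound holds against the best fixed point in $K$.

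There is no genuine obstacle here, as this is a classical result; the only points requiring care are invoking concavity under the maximization (rather than minimization) convention used in this paper, and correctly applying the non-expansiveness of the projection, which is the single geometric fact doing the real work in the telescoping argument.
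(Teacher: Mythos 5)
Your proof is correct, and it is the standard Zinkevich-style analysis (linearization by concavity, non-expansiveness of projection, telescoping of the squared distance potential) that the cited monograph of Hazan uses; the paper itself states this lemma as a known result without proof, so there is nothing to diverge from. Your derivation even recovers the sharper constants $\frac{D^2}{2\eta} + \frac{\eta T G^2}{2}$, from which the stated bound follows, and you correctly handled the ascent (concave-reward) convention used in this paper.
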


\begin{lemma}[\citet{Hazan2016OCO,Cohen2015}]
    If $f_t(\bx) = \inprod{\bg_t,\bx}$, $\norm{\bg_t} \leq G$, and the diameter of $K$ is $D$, then the FPL sequence $\bx_t$ with $\caD_\mathrm{FPL} = \mathrm{Unif}([0,1]^n)$ satisfies
    \[
        \E[\regret(T)] \leq \eta TG^2 + \frac{\sqrt{n} D^2}{\eta},
    \]
    where the expectation is taken over the randomness in the algorithm.
    Further, if $\norm{\bg_t}_\infty \leq 1$ and $K$ is an integral polytope contained in $\{\bx \in [0,1]^n : \sum_i x_i = k \}$, then
    \[
        \regret(T) \leq k\sqrt{2\log n} \left( \eta kT + \frac{1}{\eta} \right)
    \]
    for $\caD_\mathrm{FPL} = N(\bfzero, I)$.
\end{lemma}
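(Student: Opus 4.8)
The plan is to prove this via the standard \emph{Follow-the-Perturbed-Leader} analysis, combining a \emph{Be-the-Leader} inequality with a \emph{stability} (one-step-lookahead) bound, and then instantiating the distribution-dependent constants for $\mathrm{Unif}([0,1]^n)$ and $N(\bfzero, I)$. First I would use that the objectives are linear, so $\nabla f_t = \bg_t$ and the FPL rule is exactly the perturbed-leader choice $\bx_t = \argmax_{\bx \in K}\inprod{\eta\,\bg_{1:t-1} + \br, \bx}$, where $\bg_{1:t} := \sum_{s=1}^t \bg_s$. I then introduce the \emph{leader that peeks at round $t$}, namely $\bx_t^{+} := \argmax_{\bx \in K}\inprod{\eta\,\bg_{1:t} + \br, \bx}$, so that the point actually played is the lagged leader $\bx_t = \bx_{t-1}^{+}$.

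The core decomposition comes from the Be-the-Leader lemma applied to the augmented sequence $h_0(\bx) = \inprod{\br, \bx}$, $h_s(\bx) = \inprod{\eta\,\bg_s, \bx}$: by a one-line induction the leader sequence satisfies $\sum_{t\ge 0} h_t(\bx_t^{+}) \ge \sum_{t\ge 0} h_t(\bx)$ for every $\bx$, which rearranges (take $\bx=\bx^*$, divide by $\eta$) into $\inprod{\bg_{1:T}, \bx^*} \le \sum_t \inprod{\bg_t, \bx_t^{+}} + \tfrac{1}{\eta}\inprod{\br, \bx_0^{+} - \bx^*}$. Subtracting $\sum_t \inprod{\bg_t, \bx_t}$ yields
\[
\regret(T) \le \sum_{t=1}^T \inprod{\bg_t, \bx_t^{+} - \bx_t} + \frac{1}{\eta}\inprod{\br, \bx_0^{+} - \bx^*},
\]
splitting the regret into a \emph{stability} term (the sum measures how far the leader moves when round $t$ is appended, i.e.\ $\bx_t^{+} - \bx_{t-1}^{+}$) and a \emph{perturbation penalty} (the last term).

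To finish I would bound the two terms separately under each distribution. For the penalty, taking expectation gives $\tfrac1\eta\E[\inprod{\br, \bx_0^{+} - \bx^*}] \le \tfrac1\eta \E[\max_{\bx,\by\in K}\inprod{\br,\bx-\by}]$; for $\br \sim \mathrm{Unif}([0,1]^n)$ this expected support-function gap over the diameter-$D$ set is $O(\sqrt n)$, producing the $\sqrt n D^2/\eta$ term, whereas for $\br \sim N(\bfzero, I)$ over an integral polytope in $\{\sum_i x_i = k\}$ I would invoke a max-of-Gaussians maximal inequality: there are at most $\binom{n}{k}$ vertices, each of $\ell^2$-norm $\sqrt k$, so $\E[\max_{\bx}\inprod{\br,\bx}] \le \sqrt k\,\sqrt{2\log\binom{n}{k}} \le k\sqrt{2\log n}$, giving the $\tfrac1\eta k\sqrt{2\log n}$ penalty. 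For the stability term I would argue that $\inprod{\bg_t, \bx_t^{+}-\bx_{t-1}^{+}}$ is small because the two maximizers optimize linear objectives differing only by $\eta\bg_t$; the smoothing supplied by the perturbation makes the (expected) argmax move by $O(\eta\norm{\bg_t}^2)$ in the relevant direction, contributing $\eta G^2$ per round in the uniform case and $\eta k^2\sqrt{2\log n}$ per round in the Gaussian/simplex case (where $\inprod{\bg_t,\cdot}$ has width $\le k$ modulated by the same Gaussian-width factor), summing to $\eta G^2 T$ and $\eta k T\cdot k\sqrt{2\log n}$ respectively.

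The main obstacle is this stability bound: controlling $\E[\inprod{\bg_t, \bx_t^{+}-\bx_{t-1}^{+}}]$ requires the anti-concentration/smoothing property of the perturbation (a density bound for the uniform case, Gaussian concentration over the vertex set in the second), and this is exactly where the distribution-specific dimension factors $\sqrt n$ and $k\sqrt{2\log n}$ enter. A secondary technicality is that the algorithm redraws a fresh $\br_t$ each round rather than fixing a single $\br$; I would reconcile this by noting that the Be-the-Leader step only uses the perturbation in expectation, so i.i.d.\ redrawing leaves the expected-regret bound unchanged (or one couples to a single-perturbation run). Matching the precise constants, in particular the $D^2$ scaling that mirrors the OGD bound, is then a matter of the perturbation normalization used in~\citep{Hazan2016OCO,Cohen2015}.
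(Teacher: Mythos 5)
The paper never proves this lemma: it is imported as a black-box regret bound from the cited sources (\citet{Hazan2016OCO} for the general FPL analysis, \citet{Cohen2015} for the Gaussian perturbation over $0/1$ polytopes) and is consumed only through the quantity $\regFPL$ in Theorem~\ref{thm:DR}. Your reconstruction is exactly the proof those references give: the Be-the-Leader telescoping with the perturbation folded in as a fictitious round $h_0(\bx)=\inprod{\br,\bx}$, the resulting split of the regret into a stability term $\sum_t\inprod{\bg_t,\bx_t^{+}-\bx_t}$ and a penalty $\frac1\eta\inprod{\br,\bx_0^{+}-\bx^*}$, and distribution-specific bounds for each piece. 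Your Gaussian penalty computation is correct and is precisely the Cohen--Hazan argument: at most $\binom{n}{k}$ vertices, each of norm $\sqrt k$, so $\E[\max_{\bx}\inprod{\br,\bx}]\le \sqrt{k}\sqrt{2\log\binom{n}{k}}\le k\sqrt{2\log n}$. Your resolution of the fresh-perturbation-per-round issue (the marginal of each $\bx_t$ is unchanged and the comparator term is deterministic, so expected regret matches the single-perturbation coupling) is also the standard and correct one.

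The one place your sketch is materially thinner than the sources is the uniform-case stability term, which you assert is $O(\eta\norm{\bg_t}^2)$ per round. The generic coupling argument you gesture at --- the shifted and unshifted perturbed cumulative gradients induce the same argmax except on an event of probability at most $\eta\norm{\bg_t}_1$, on which the linear gain differs by at most $\max_{\bx,\by\in K}\inprod{\bg_t,\bx-\by}$ --- yields $\eta\norm{\bg_t}_1\, G D \le \eta\sqrt{n}\,G^2 D$ per round, carrying extra $\sqrt n$ and $D$ factors beyond the claimed $\eta G^2$; the clean split $\eta TG^2 + \sqrt n D^2/\eta$ as stated needs the specific normalization and the smoothed-potential (FTRL) view of FPL, where per-round stability is controlled by a Hessian quadratic form in $\bg_t$, as done in \citet{Hazan2016OCO}. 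You flag this dependence on normalization yourself, so it is an acknowledged loose end rather than an error, but as written that step would not produce the stated constant without this additional input. Since the paper itself offers only a citation here, your proposal is, if anything, more explicit than the paper's treatment.
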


\subsection{Online Algorithm for Adversarial Online CVaR Learning}
We now present our online algorithm, \textsc{OnlineRASCAL}, for adversarial online submodular CVaR learning.
Let $F_t : \R^n \to [0,1]$ be a monotone continuous DR-submodular function with $F_t(\bfzero) = 0$ for $t=1, \dots, T$.
Let $K \subseteq \R_{\geq 0}^n$ be a down-closed convex set.
\textsc{OnlineRASCAL} maintains $(\bx_t, \tau_t) \in K \times [0,1]$.
Let us consider the $b$-th mini-batch and denote a variable in this mini-batch by $\bx_b$.
Within the $b$-th mini-batch, we play the same $\bx_b$ and use OGD to learn $\argmax_{\tau} \sum_{t=(b-1)B + 1}^{bB} \tilde{H}_t(\bx_b, \tau)$.
At the end of the $b$-th mini-batch, we update $\bx_b$ using online continuous greedy where each inner iteration performs FPL.
The pseudocode can be found in Algorithm~\ref{alg:Online-RASCAL}.

\begin{algorithm}[h]
    \caption{\textsc{OnlineRASCAL}}\label{alg:Online-RASCAL}
\begin{algorithmic}[1]
    \REQUIRE{learning rates $\eta, \lambda > 0$, step size $\delta > 0$, FPL distribution $\caD_\mathrm{FPL}$, and mini-batch size $B > 0$}
    \STATE Initialize $\bx_1 \in K$.
    \FOR{$b=1,\dots,T/B$}
    \STATE \textcolor{gray}{/* learn $\tau$ in mini-batch */}
    \STATE Initialize $\tau_{(b-1)B + 1} \in [0,1]$.
        \FOR{$t = (b-1)B + 1, \dots, bB$}
        \STATE Play $(\bx_b, \tau_t)$ and observe $F_t$.
        \STATE $\gamma_{t} \gets \partial_\tau \tilde{H}_t(\bx_b, \tau_t)$.
        \STATE $\tau_{t+1} \gets \proj_{[0,1]}(\tau_{t} + \eta\gamma_t)$.
        \COMMENT{OGD}
        \ENDFOR
    \STATE Let $\bar{H}_b(\bx) = \frac{1}{B} \max_{\tau}\sum_{t=(b-1)B+1}^{bB} \tilde{H}_t(\bx, \tau)$.
    \STATE \textcolor{gray}{/* continuous greedy */}
    \STATE Let $\bx_{b+1}^0 \gets \bfzero$.
    \FOR{$s = 0, \delta, 2\delta, \dots, 1-\delta$}
    \STATE Compute $\bg_b^s = \nabla_{\bx} \bar{H}_b(\bx_b^s)$ via \textsc{SmoothGrad} and \textsc{SmoothTau}.
    \STATE Take a vertex $\bd_{b+1}^s$ of $K$ that maximizes $\inprod*{\lambda \sum_{b'=1}^{b} {\bg_{b'}^s} + \br_{b}^s, \bd}$ for $\bd \in K$, where $\br_b^s \sim \caD_\mathrm{FPL}$. \COMMENT{FPL}
    \STATE $\bx_{b+1}^{s + \delta} \gets \bx_{b+1}^s + \delta \bd_{b+1}^s$
    \ENDFOR
    \STATE Let $\bx_{b+1} = \bx_{b+1}^1$.
    \ENDFOR
\end{algorithmic}
\end{algorithm}

\begin{theorem}\label{thm:Online-RASCAL-smooth}
    Under Assumption~\ref{asmp:DR}, \textsc{OnlineRASCAL} achieves
    \begin{align*}
        \left(1- \frac{1}{e} \right)\sum_{t=1}^T \tilde{H}_t(\bx^*, \tau^*) - \sum_{t=1}^T \tilde{H}_t(\bx_t, \tau_t)
        &\leq
        B\delta\sum_{s=0}^{1-\delta} \regFPL(T/B)
        + \frac{T}{B}\regOGD(B)
        + \delta \beta_u D^2 T
    \end{align*}
    for arbitrary $\bx^*$ and $\tau^*$,
    where $\regFPL$ is the 1-regret of the $s$-th FPL, $\regOGD$ is the 1-regret of OGD, $\beta_u = \frac{1}{\alpha}(\beta+\frac{LG}{u})$,
    and $D$ is the diameter of $K$.
\end{theorem}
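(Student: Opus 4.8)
The plan is to separate the approximate regret into an inner part governed by the OGD updates that learn $\tau$ and an outer part governed by the online continuous greedy that learns $\bx$, and to establish a $(1-1/e)$-regret bound for the outer part through an FPL-driven continuous-greedy recursion. Write $M=T/B$ for the number of mini-batches and recall $\bar H_b(\bx)=\frac{1}{B}\max_{\tau}\sum_{t=(b-1)B+1}^{bB}\tilde H_t(\bx,\tau)$, which is monotone and up-concave. Since the algorithm plays the same $\bx_b$ throughout the $b$-th mini-batch, the first task is to relate the realized rewards $\sum_t\tilde H_t(\bx_t,\tau_t)$ to $B\sum_b\bar H_b(\bx_b)$, reducing the claim to a $(1-1/e)$-regret statement for the sequence $\bar H_1,\dots,\bar H_M$.

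First I would dispatch the $\tau$-updates. Inside mini-batch $b$ the point $\bx_b$ is fixed and OGD runs on $\tau\mapsto\tilde H_t(\bx_b,\tau)$, which is concave with derivative in $[1-1/\alpha,1]$; the OGD regret bound then gives $\sum_{t\in b}\tilde H_t(\bx_b,\tau_t)\ge B\bar H_b(\bx_b)-\regOGD(B)$, and summing over the $M$ mini-batches contributes the $\frac{T}{B}\regOGD(B)$ term. On the comparator side, $\sum_t\tilde H_t(\bx^*,\tau^*)\le B\sum_b\bar H_b(\bx^*)$ by definition of $\bar H_b$, and since $1-\frac{1}{e}\ge0$ it remains to upper bound $(1-\frac1e)B\sum_b\bar H_b(\bx^*)-B\sum_b\bar H_b(\bx_b)$.

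Next I would analyze the outer continuous greedy one inner step $s$ at a time. To take the improvement step without differentiating the envelope, let $\tau^s$ be the maximizer defining $\bar H_b(\bx_b^s)$; monotonicity of the max gives $\bar H_b(\bx_b^{s+\delta})\ge\frac1B\sum_{t\in b}\tilde H_t(\bx_b^{s+\delta},\tau^s)$, and the $\beta_u$-Lipschitzness of $\nabla_\bx\tilde H_t$ together with the envelope identity $\bg_b^s=\nabla_\bx\bar H_b(\bx_b^s)$ yields $\bar H_b(\bx_b^{s+\delta})\ge\bar H_b(\bx_b^s)+\delta\inprod{\bg_b^s,\bd_b^s}-\frac{\beta_u\delta^2}{2}\norm{\bd_b^s}^2$. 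Summing over $b$ with $\norm{\bd_b^s}\le D$, applying the regret guarantee of the $s$-th FPL instance $\sum_b\inprod{\bg_b^s,\bd_b^s}\ge\sum_b\inprod{\bg_b^s,\bx^*}-\regFPL(M)$, and finally the monotone up-concave inequality $\inprod{\nabla\bar H_b(\bx_b^s),\bx^*}\ge\bar H_b(\bx^*)-\bar H_b(\bx_b^s)$, I obtain for the summed gap $g^s=\sum_b\bigl(\bar H_b(\bx^*)-\bar H_b(\bx_b^s)\bigr)$ the contraction $g^{s+\delta}\le(1-\delta)g^s+\delta\,\regFPL(M)+\frac{\beta_u\delta^2D^2M}{2}$.

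Unrolling this recursion over the $1/\delta$ inner steps, using $(1-\delta)^{1/\delta}\le e^{-1}$ and the base case $g^0\le\sum_b\bar H_b(\bx^*)$ (i.e. $\sum_b\bar H_b(\bfzero)\ge0$, which holds up to the smoothing offset controlled by $u$ since $F_t(\bfzero)=0$), gives $\sum_b\bar H_b(\bx_b)\ge(1-\frac1e)\sum_b\bar H_b(\bx^*)-\delta\sum_s\regFPL(M)-\frac{\beta_u\delta D^2M}{2}$. Multiplying by $B$, recombining with the OGD contribution, and using $BM=T$ collects exactly $B\delta\sum_s\regFPL(T/B)$, $\frac{T}{B}\regOGD(B)$, and $\frac{\beta_u\delta D^2T}{2}\le\delta\beta_uD^2T$. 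I expect the \emph{main obstacle} to be this continuous-greedy recursion: the up-concavity inequality cannot be applied to the realized FPL direction $\bd_b^s$ pointwise, so it must be invoked at $\bx^*$ and transferred to $\bd_b^s$ only in aggregate through the FPL regret, and making this exchange consistent with the per-round smoothness step (and with the envelope gradient that \textsc{SmoothGrad}/\textsc{SmoothTau} actually return) is the delicate part, whereas the $\tau$-side and the final bookkeeping are routine.
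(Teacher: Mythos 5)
Your proof is correct and follows essentially the same route as the paper's: the same mini-batch decomposition (OGD regret for $\tau$, FPL-driven continuous greedy for $\bx$), the same per-step contraction $g^{s+\delta}\le(1-\delta)g^s+\delta\,\regFPL(T/B)+O(\delta^2\beta_u D^2 T/B)$, and the same unrolling via $(1-\delta)^{1/\delta}\le e^{-1}$ followed by the OGD bookkeeping. The only local differences are that you obtain the improvement step from smoothness of $\tilde H_t(\cdot,\tau^s)$ at the fixed maximizer $\tau^s$, whereas the paper applies up-concavity at $\bx_b^{s+\delta}$ plus Lipschitzness of the envelope gradient $\nabla\bar H_b$ (your variant is if anything cleaner, since Wilder's smoothness lemma is stated for fixed $\tau$), and that you explicitly flag the $O(u/\alpha)$ smoothing offset in the base case, where the paper simply asserts $\bar H_b(\bfzero)=0$.
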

Before diving into the formal proof, we outline the proof.
Recall that the 1-regret of the $s$-th FPL algorithm is
\[
    \regFPL(T/B) = \max_{\bd^* \in K} \sum_{b=1}^{T/B} \inprod{\bg^s_b, \bd^*} - \sum_{b=1}^{T/B} \inprod{\bg^s_b, \bd^s_b}
\]
for each $s$.
Then, using the up-concavity of $\bar{H}_b$, we can prove
\begin{align*}
    &\sum_{b=1}^{T/B} \bar{H}_b(\bx^*) - \bar{H}_b(\bx_b^{s})
    \leq \sum_{b=1}^{T/B} \inprod{\nabla_\bx\bar{H}_b(\bx_b^s), \bd_b^s} + \regFPL(T/B)
\end{align*}
for each $s$.
On the other hand, for each $s$, we have
\begin{align*}
   &\sum_{b=1}^{T/B} \bar{H}_b(\bx_b^{s+\delta}) - \bar{H}_b(\bx_b^{s})
   \geq \sum_{b=1}^{T/B} \inprod{\nabla \bar{H}_b(\bx_b^{s}) , \delta \bd_b^s} - \delta^2 \beta_u D^2 \frac{T}{B},
\end{align*}
where $\beta_u$ is the smoothness parameter of $\bar{H}_b$.
By combining these two inequalities, we can show
\begin{align*}
    &\left(1- \frac{1}{e} \right)\sum_{b=1}^{T/B} \bar{H}_b(\bx^*) - \sum_{b=1}^{T/B} \bar{H}_b(\bx_b)
    \leq \delta \sum_{s=0}^{1-\delta} \regFPL(T/B) + \delta \beta_u D^2 \frac{T}{B}.
\end{align*}
via the standard analysis of continuous greedy.

The next step is to relate $\bar{H}_b$ with the regret in terms of $\tilde{H}_t$.
We have
\begin{align*}
    \sum_{t = (b-1)B + 1}^{bB} \tilde{H}_t(\bx_b, \tau_t)
    &\geq \max_{\tau} \sum_{t = (b-1)B + 1}^{bB} \tilde{H}_t(\bx_b, \tau) - \regOGD(B) \\
    &= B \cdot \bar{H}_b(\bx_b) - \regOGD(B),
\end{align*}
by the definition of the 1-regret of OGD\@.
Combining these two bounds, we can prove Theorem~\ref{thm:Online-RASCAL-smooth}.

\begin{proof}[Proof of Theorem~\ref{thm:Online-RASCAL-smooth}]
First, we obtain
\begin{align*}
    \sum_{b=1}^{T/B} \bar{H}_b(\bx^*) - \bar{H}_b(\bx_b^{s})
    &\leq \sum_{b=1}^{T/B} \bar{H}_b(\bx^* \vee \bx_b^s) - \bar{H}_b(\bx_b^{s}) \tag{monotonicty} \\
    &\leq \sum_{b=1}^{T/B} \inprod{\nabla_\bx\bar{H}_b(\bx_b^s), \bx^* \vee \bx_b^s - \bx_b^{s}} \tag{up-concavity}\\
    &\leq \sum_{b=1}^{T/B} \inprod{\nabla_\bx\bar{H}_b(\bx_b^s), \bx^*} \tag{$\bx^* \geq \bx^*\vee\bx_b^s - \bx_b^s$ and $\nabla_\bx\bar{H}_b(\bx_b^s)\geq\bfzero$} \\
    &\leq \sum_{b=1}^{T/B} \inprod{\nabla_\bx\bar{H}_b(\bx_b^s), \bd_b^s} + \regFPL(T/B). \tag{definition of 1-regret}
\end{align*}
On the other hand, we have
\begin{align*}
   \sum_{b=1}^{T/B} \bar{H}_b(\bx_b^{s+\delta}) - \bar{H}_b(\bx_b^{s})
   &\geq \sum_{b=1}^{T/B} \inprod{\nabla \bar{H}_b(\bx_b^{s+\delta}) , \delta \bd_b^s} \tag{up-concavity} \\
   &\geq \sum_{b=1}^{T/B} \left[\inprod{\nabla \bar{H}_b(\bx_b^{s}) , \delta \bd_b^s} - \delta^2 \beta_u \norm{\bd_b^s}^2 \right] \tag{Lipschitz continuouity of $\nabla \bar{H}_b(\cdot)$} \\
   &\geq \sum_{b=1}^{T/B} \inprod{\nabla \bar{H}_b(\bx_b^{s}) , \delta \bd_b^s} - \delta^2 \beta_u D^2 \frac{T}{B}, \tag{$\norm{\bd_b^s} \leq D$}
\end{align*}
Hence, we obtain
\begin{align*}
   &\sum_{b=1}^{T/B} \bar{H}_b(\bx_b^{s+\delta}) - \bar{H}_b(\bx_b^{s})
   \geq \delta\sum_{b=1}^{T/B} (\bar{H}_b(\bx^*) - \bar{H}_b(\bx_b^{s}) ) - \delta \regFPL (T/B) - \delta^2 \beta_u D^2 \frac{T}{B}
\end{align*}
for $s = 0, \delta, 2\delta, \dots, 1-\delta$.
Then,
\begin{align*}
    \sum_{b=1}^{T/B} (\bar{H}_b(\bx^*) - \bar{H}_b(\bx_b^{s+\delta}) )
    &= \sum_{b=1}^{T/B} (\bar{H}_b(\bx^*) - \bar{H}_b(\bx_b^{s}) ) - \sum_{b=1}^{T/B} (\bar{H}_b(\bx_b^{s+\delta}) - \bar{H}_b(\bx_b^{s})) \\
    &\leq \sum_{b=1}^{T/B} (\bar{H}_b(\bx^*) - \bar{H}_b(\bx_b^{s}) )
    - \delta\sum_{b=1}^{T/B} (\bar{H}_b(\bx^*) - \bar{H}_b(\bx_b^{s}) ) + \delta \regFPL (T/B) + \delta^2 \beta_u D^2 \frac{T}{B} \\
    &= \left( 1 - \delta \right) \sum_{b=1}^{T/B} (\bar{H}_b(\bx^*) - \bar{H}_b(\bx_b^{s}) )
    + \delta \regFPL (T/B) + \delta^2 \beta_u D^2 \frac{T}{B}
\end{align*}
for $s = 0, \delta, 2\delta, \dots, 1-\delta$.
Using this inequality inductively, we have
\begin{align*}
    \sum_{b=1}^{T/B} (\bar{H}_b(\bx^*) - \bar{H}_b(\bx_b^{1}) )
    & \leq \left( 1 - \delta \right)^{1/\delta} \sum_{b=1}^{T/B} (\bar{H}_b(\bx^*) - \bar{H}_b(\bx_b^{0}) )
    + \delta \sum_{s=0}^{1-\delta} \regFPL (T/B) + \delta \beta_u D^2 \frac{T}{B} \\
    & \leq \frac{1}{e} \sum_{b=1}^{T/B} \bar{H}_b(\bx^*)
    + \delta \sum_{s=0}^{1-\delta} \regFPL (T/B) + \delta \beta_u D^2 \frac{T}{B},
\end{align*}
where in the last inequality we used $1+x \leq e^x$ for $x \in \R$ and $\bar{H}_b(\bx_b^0) = \bar{H}_b(\bfzero) = 0$.
Rearranging the terms and using $\bx_b := \bx_b^1$, we obtain
\begin{align}\label{eq:approximate-regret-bar}
    &\left(1-\frac{1}{e} \right)\sum_{b=1}^{T/B} \bar{H}_b(\bx^*) - \sum_{b=1}^{T/B} \bar{H}_b(\bx_b)
    \leq \delta \sum_{s=0}^{1-\delta} \regFPL(T/B) + \delta \beta_u D^2 \frac{T}{B}.
\end{align}
Finally,
\begin{align*}
    \sum_{t=1}^T \tilde{H}_t(\bx_t, \tau_t)
    &= \sum_{b=1}^{T/B} \sum_{t=(b-1)B+1}^{bB} \tilde{H}_t(\bx_b, \tau_t) \tag{$\bx_t$ is a constant in each mini-batch} \\
    &\geq \sum_{b=1}^{T/B} \left[ \max_{\tau}\sum_{t=(b-1)B+1}^{bB} \tilde{H}_t(\bx_b, \tau) -\regOGD(B) \right] \\
    &= B \sum_{b=1}^{T/B} \bar{H}_b(\bx_b) - \frac{T}{B}\regOGD(B) \tag{definition of $\bar{H}_b$}\\
    &\geq B \left[ \left(1-\frac{1}{e}\right)\sum_{b=1}^{T/B} \bar{H}_b(\bx^*) - \delta\sum_{s=0}^{1-\delta} \regFPL(T/B) + \delta \beta_u D^2 \frac{T}{B} \right] - \frac{T}{B}\regOGD(B) \tag{by \eqref{eq:approximate-regret-bar}}\\
    &= \left(1-\frac{1}{e}\right)\sum_{b=1}^{T/B} \max_{\tau} \sum_{t=(b-1)B+1}^{bB} \tilde{H}_t(\bx^*, \tau)
    - B\delta \sum_{s=0}^{1-\delta} \regFPL(T/B) + T\delta \beta_u D^2 - \frac{T}{B}\regOGD(B)\\
    &\geq \left(1-\frac{1}{e}\right)\sum_{b=1}^{T/B} \sum_{t=(b-1)B+1}^{bB} \tilde{H}_t(\bx^*, \tau^*)
    - B\delta \sum_{s=0}^{1-\delta} \regFPL(T/B) + T\delta \beta_u D^2 - \frac{T}{B}\regOGD(B)\\
    &= \left(1-\frac{1}{e}\right)\sum_{t=1}^{T} \tilde{H}_t(\bx^*, \tau^*)
    - B\delta \sum_{s=0}^{1-\delta} \regFPL(T/B) + T\delta \beta_u D^2 - \frac{T}{B}\regOGD(B).
\end{align*}
Thus we have the desired bound in Theorem~\ref{thm:Online-RASCAL-smooth}.
\end{proof}

Finally, we need to convert the approximate regret bound in Theorem~\ref{thm:Online-RASCAL-smooth}, which relates to the smoothed auxiliary functions $\tilde{H}_t$, to the one relating the original auxiliary function $H_t$ for CVaR.
This is achieved by choosing the smoothing parameter $u > 0$ appropriately.
Summarizing the above argument and plugging in the regret bounds of OGD and FPL, we obtain the following.

\begin{theorem}[formal version of Theorem~\ref{thm:Online-RASCAL-informal}]\label{thm:DR}
    Under Assumption~\ref{asmp:DR}, Algorithm~\ref{alg:Online-RASCAL} achieves
    \begin{align*}
        &\left(1-\frac{1}{e}\right)\sum_{t=1}^T H_t(\bx^*, \tau^*) - \sum_{t=1}^T H_t(\bx_t, \tau_t)
        \leq
        B\delta\sum_{s=0}^{1-\delta} \regFPL(T/B) + \frac{T}{B}\regOGD(B)
        + \delta \beta_u D^2 T
        + uT\left(1+\frac{1}{\alpha}\right).
    \end{align*}
    for any $\bx^*$ and $\tau^*$.
    By setting
    $B = \frac{\alpha C_\alpha \sqrt{T}}{DGn^{1/4}}$,
    $\delta = \frac{\alpha^{2}}{D^{2} \left((1+\alpha)G L \sqrt{T} + \alpha \beta T^{1/4} \right)}$,
    $\eta = \frac{1}{C_\alpha\sqrt{B}}$,
    $\lambda = \frac{\alpha D n^{1/4}\sqrt{B/T}}{G}$,
    $u = \frac{T^{-1/4}}{(1+1/\alpha)}$,
    and
    $\caD_\mathrm{FPL} = \mathrm{Unif}({[0,1]}^n)$,
    we obtain
    \begin{align*}
        &\left(1-\frac{1}{e}\right)\sum_{t=1}^T H_t(\bx^*, \tau^*) - \sum_{t=1}^T H_t(\bx_t, \tau_t)
        \leq
        O\left(\frac{\sqrt{C_\alpha GD}n^{1/8}}{\sqrt\alpha} \right)  T^{3/4},
    \end{align*}
    where $C_\alpha := \max\{1, \frac{1}{\alpha} - 1\}$.
    Further, if $K$ is an integral polytope contained in $\{\bx \in {[0,1]}^n : \sum_i x_i = k\}$, then
    \begin{align*}
        &\left(1-\frac{1}{e}\right)\sum_{t=1}^T H_t(\bx^*, \tau^*) - \sum_{t=1}^T H_t(\bx_t, \tau_t)
        \leq
        O\left( \frac{\sqrt{C_\alpha G_\infty} k^{3/4} \log^{1/4} n}{\sqrt\alpha} \right) T^{3/4}
    \end{align*}
    for
    $B = \frac{\sqrt{2} C_{\alpha} \sqrt{T} \alpha}{2 G_\infty k^{3/2} \sqrt{\log{\left(n \right)}}}$,
    $\delta = \frac{\alpha^{2}}{D^{2} \left((1+\alpha)G L \sqrt{T} + \alpha \beta T^{1/4} \right)}$,
    $\eta = \frac{1}{C_\alpha\sqrt{B}}$,
    $\lambda = \sqrt{\frac{B}{Tk}}$,
    $u = \frac{T^{-1/4}}{(1+1/\alpha)}$,
    and
    $\caD_\mathrm{FPL} = N(\bfzero, I)$.
\end{theorem}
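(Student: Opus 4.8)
The plan is to build on the approximate-regret bound of Theorem~\ref{thm:Online-RASCAL-smooth}, which is already stated for the smoothed auxiliary functions $\tilde H_t$, and then to carry out two tasks: (i) replace each $\tilde H_t$ by the true auxiliary function $H_t$, and (ii) substitute explicit regret bounds for the OGD and FPL subroutines while tuning the five parameters $B,\delta,\lambda,\eta,u$.

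For task (i) I would invoke the first part of the smoothing lemma (Lemma~6 of \cite{Wilder2018}), namely $\abs{H_t(\bx,\tau)-\tilde H_t(\bx,\tau)}\le\frac{u(1+1/\alpha)}{2}$ uniformly in $(\bx,\tau)$. Applying it at the comparator $(\bx^*,\tau^*)$ and at each played pair $(\bx_t,\tau_t)$, and using $(1-1/e)+1\le 2$, the replacement costs an additive $uT(1+1/\alpha)$ on the right-hand side; together with Theorem~\ref{thm:Online-RASCAL-smooth} this gives exactly the first displayed inequality of the theorem.

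For task (ii) I would bound the two regrets explicitly. The OGD subroutine learns $\tau$ over $[0,1]$ (diameter $1$), and since $\partial_\tau H_t\in[1-1/\alpha,1]$ its gradient is at most $C_\alpha$, so the OGD lemma with $\eta=1/(C_\alpha\sqrt B)$ yields $\regOGD(B)=O(C_\alpha\sqrt B)$. The FPL subroutine runs for $T/B$ rounds on linear objectives with direction $\bg_b^s=\nabla_\bx\bar H_b(\bx_b^s)$; differentiating the smoothing integral gives $\nabla_\bx\tilde H_t=\frac1\alpha\bigl(\frac1u\int_0^u\mathbf{1}[\tau+\xi>F_t(\bx)]\,d\xi\bigr)\nabla F_t$, whose norm is at most $G/\alpha$, so by the envelope theorem the effective gradient bound is $\hat G=G/\alpha$ and the $\mathrm{Unif}([0,1]^n)$ FPL lemma gives $\E[\regFPL(T/B)]=O\bigl(\lambda\frac{T}{B}\hat G^2+\frac{\sqrt n D^2}{\lambda}\bigr)$. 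I would use $\beta_u=\frac1\alpha(\beta+LG/u)$ (second part of the smoothing lemma) for the curvature term $\delta\beta_u D^2T$, and note that $\delta\sum_{s=0}^{1-\delta}\regFPL(T/B)=\regFPL(T/B)$ because the sum has $1/\delta$ identical summands.

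The crux, and the step I expect to be the main obstacle, is the parameter tuning that forces all four surviving error terms $B\,\E[\regFPL(T/B)]$, $\frac{T}{B}\regOGD(B)$, $\delta\beta_u D^2T$, and $uT(1+1/\alpha)$ to collapse to the same order $O\bigl(\tfrac{\sqrt{C_\alpha GD}\,n^{1/8}}{\sqrt\alpha}\bigr)T^{3/4}$. Concretely, $u\sim T^{-1/4}$ makes the smoothing term exactly $T^{3/4}$; the choice of $\delta$ balances the dominant $LG/u\sim LGT^{1/4}$ part of $\beta_u$ against the $\sqrt T$ scale so that $\delta\beta_u D^2T=O(T^{3/4})$; $B\sim\sqrt T$ equalizes the $\sqrt{BT}$ growth of the FPL term with the $T/\sqrt B$ growth of the OGD term; and $\lambda$ is taken as the minimizer of $\lambda\frac{T}{B}\hat G^2+\frac{\sqrt nD^2}{\lambda}$, which after substituting the chosen $B$ produces the $n^{1/8}$ and $\sqrt{C_\alpha GD}/\sqrt\alpha$ factors. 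The integral-polytope case is handled identically, except that the Gaussian FPL lemma replaces the $\sqrt n D^2$ term by the $k\sqrt{2\log n}$-type bound and the relevant norms become $\ell^\infty$ quantities $G_\infty$; repeating the same balancing with $\lambda=\sqrt{B/(Tk)}$ yields the $k^{3/4}\log^{1/4}n$ rate. The only real care beyond bookkeeping is keeping the $\alpha$- and $C_\alpha$-dependencies consistent across the gradient bound, the OGD step size, and the final constant.
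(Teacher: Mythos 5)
Your proposal is correct and follows essentially the same route as the paper: the paper's own (very terse) proof likewise takes Theorem~\ref{thm:Online-RASCAL-smooth}, converts $\tilde H_t$ to $H_t$ via part~1 of the smoothing lemma at cost $uT(1+1/\alpha)$, plugs in the OGD bound (gradient bound $C_\alpha$, giving $O(C_\alpha\sqrt B)$) and the FPL bounds (effective gradient $G/\alpha$, resp.\ $G_\infty/\alpha$ with the Gaussian lemma), and balances $u,\delta,B,\lambda,\eta$ exactly as you describe. Your filled-in details (the $n^{1/8}$ factor from optimizing $\lambda$, the $T^{3/4}$ scale of each of the four error terms, and the $k^{3/4}\log^{1/4}n$ rate in the integral-polytope case) all check out against the paper's stated parameter choices.
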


\begin{remark}
    One may wonder if we can replace the inner FPL algorithms with \emph{online mirror descent (OMD)}~\citep{Hazan2016OCO} to obtain a (theoretically) better bound.
    A drawback of OMD is that it requires an oracle that computes the Bregman projection on $K$, which is more powerful than the linear optimization oracle used in our algorithm.
    This difference has a significant effect on computational complexity.
    For example, if $K$ is a matroid polytope, then the linear optimization on $K$ can be solved efficiently with the greedy algorithm, whereas the Bregman projection requires minimizing submodular functions multiple times~\citep{Suehiro2012}.
    Therefore, we omit the OMD version in this paper.
\end{remark}

\section{Online to Batch Conversion and Proof of Theorem~\ref{thm:DR-stochastic}}
In this section, we show that an algorithm for adversarial online submodular CVaR learning can be used for CVaR maximization in an i.i.d.\ environment.
As an application, we show that \textsc{StochasticRASCAL} outputs feasible $\bx$ that $\E[\CVaR_\caD(\bx)] \geq (1-1/e)\CVaR_\caD(\bx^*) - O(T^{-1/4})$ for an arbitrary feasible $\bx^*$.
The idea behind the conversion is similar to an \emph{online to batch conversion}~\citep{Cesa2002}

Let $z_1, \dots, z_T$ be i.i.d.\ samples, $F_t = F(\cdot; z_t)$, and $H_t$ be a function defined as
\[
    H_t(\bx, \tau) = \tau - \frac{1}{\alpha}{\left[\tau - F_t(\bx)\right]}_+
    = \tau - \frac{1}{\alpha}{\left[\tau - F(\bx; z_t)\right]}_+,
\]
for $t=1,\dots,T$.
Define a function $H$ by
\[
    H(\bx, \tau) = \tau - \frac{1}{\alpha}\E_{z\sim\caD}{\left[\tau - F(\bx;z)\right]}_+.
\]
Note that $\CVaR_\caD(\caP) = \max_{\tau\in[0,1]} H(\caP,\tau)$.
Further, for fixed $\bx$ and $\tau$, we have $\E_{z_t}H_t(\bx,\tau) = H(\bx,\tau)$.

\begin{lemma}[online to batch]\label{lem:online-to-batch}
    Suppose that an online algorithm outputs $(\bx_t, \tau_t)$ for $F_1, \dots, F_T$.
    Let $\bx$ be sampled uniformly at random from $\bx_1, \dots, \bx_T$.
    Then, for an arbitrary $\bx^* \in K$,
    \begin{align*}
        \E[\CVaR_\caD(\bx)]
        &\geq \left(1-\frac{1}{e}\right)\CVaR_\caD(\bx^*) - \frac{1}{T}\E[\regret_{1-1/e}(T)],
    \end{align*}
    where the expectation is taken over $z_1,\dots,z_T$ and the randomness of the online algorithm.
\end{lemma}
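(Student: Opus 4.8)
The plan is to carry out the standard online-to-batch conversion, leveraging two facts recorded just before the statement: the variational identity $\CVaR_\caD(\bx_t) = \max_\tau H(\bx_t,\tau)$, and the unbiasedness $\E_{z_t}[H_t(\bx,\tau)] = H(\bx,\tau)$ that holds for any \emph{fixed} pair $(\bx,\tau)$. Since $\bx$ is drawn uniformly from $\{\bx_1,\dots,\bx_T\}$, I would first write
\[
    \E[\CVaR_\caD(\bx)] = \frac{1}{T}\sum_{t=1}^T \E[\CVaR_\caD(\bx_t)] \geq \frac{1}{T}\sum_{t=1}^T \E[H(\bx_t,\tau_t)],
\]
where the inequality simply drops from the maximum over $\tau$ to the particular value $\tau_t$ played in round $t$.

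The decisive step is to replace the population value $H(\bx_t,\tau_t)$ by the realized value $H_t(\bx_t,\tau_t)$ inside the expectation. Because the online protocol forces the learner to commit to $(\bx_t,\tau_t)$ \emph{before} $F_t$ (equivalently $z_t$) is revealed, the pair $(\bx_t,\tau_t)$ is a function of $z_1,\dots,z_{t-1}$ and the algorithm's internal randomness, hence independent of $z_t$. Conditioning on the history through round $t-1$ and applying unbiasedness gives $\E[H_t(\bx_t,\tau_t)\mid z_1,\dots,z_{t-1}] = H(\bx_t,\tau_t)$, so by the tower rule $\E[H(\bx_t,\tau_t)] = \E[H_t(\bx_t,\tau_t)]$. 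Summing over $t$ yields
\[
    \E[\CVaR_\caD(\bx)] \geq \frac{1}{T}\,\E\!\left[\sum_{t=1}^T H_t(\bx_t,\tau_t)\right].
\]

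To finish, I would invoke the definition of $\regret_{1-1/e}(T)$ to rewrite the cumulative reward as $\sum_{t=1}^T H_t(\bx_t,\tau_t) = (1-\tfrac1e)\sum_{t=1}^T H_t(\bx^*,\tau^*) - \regret_{1-1/e}(T)$. Here $(\bx^*,\tau^*)$ is deterministic, so unbiasedness gives $\E[H_t(\bx^*,\tau^*)] = H(\bx^*,\tau^*)$ for every $t$, making the comparator sum equal to $T\cdot H(\bx^*,\tau^*)$. Taking expectations produces
\[
    \E[\CVaR_\caD(\bx)] \geq \left(1-\frac{1}{e}\right) H(\bx^*,\tau^*) - \frac{1}{T}\,\E[\regret_{1-1/e}(T)],
\]
and since the regret is defined for \emph{arbitrary} $\tau^*$, I am free to set $\tau^* = \argmax_\tau H(\bx^*,\tau)$, which turns $H(\bx^*,\tau^*)$ into $\CVaR_\caD(\bx^*)$ and gives exactly the claimed bound.

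The only genuinely delicate point — and the part I would state most carefully — is the measurability/independence argument justifying $\E[H(\bx_t,\tau_t)] = \E[H_t(\bx_t,\tau_t)]$; the remainder is bookkeeping. Its validity rests entirely on the ordering of the online protocol, namely that $(\bx_t,\tau_t)$ is fixed before $z_t$ arrives, so that the realized loss is an unbiased estimate of the population value; were $(\bx_t,\tau_t)$ permitted to depend on $z_t$, this identity would fail and the conversion would collapse.
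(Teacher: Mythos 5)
Your proposal is correct and follows essentially the same route as the paper's own proof: average over the uniformly sampled index, drop the maximum over $\tau$ to the played $\tau_t$, use the independence of $(\bx_t,\tau_t)$ from $z_t$ (via conditioning on the history and the tower rule) to equate $\E[H(\bx_t,\tau_t)]$ with $\E[H_t(\bx_t,\tau_t)]$, apply the definition of $\regret_{1-1/e}(T)$ together with unbiasedness at the fixed comparator, and finally optimize over $\tau^*$. The independence step you flag as delicate is precisely the first observation in the paper's proof, so the two arguments coincide in both structure and substance.
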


\begin{proof}
    First, we note that
    \begin{align*}
        \E_{z_1,\dots,z_T} H_t(\bx_t, \tau_t)
        &= \E_{z_1,\dots,z_{t-1}} \E_{z_t}[H_t(\bx_t, \tau_t) \mid z_1, \dots, z_{t-1}] \\
        &= \E_{z_1,\dots,z_{t-1}} H(\bx_t, \tau_t),
    \end{align*}
    where in the last equality we used $\bx_t, \tau_t$ is independent to $z_t$ by the definition of online algorithms.
    For an arbitrary $\bx^* \in K, \tau^* \in [0,1]$, we have
    \begin{align*}
        \E[\CVaR_\caD(\bx)]
        &= \frac{1}{T}\sum_{t=1}^T \E_{z_1,\dots,z_T}[\CVaR_\caD(\bx_t)] \\
        &\geq \frac{1}{T}\sum_{t=1}^T \E_{z_1,\dots,z_T}[H(\bx, \tau_t)] \tag{since $\CVaR_\caD(\bx_t) = \max_{\tau \in [0,1]}H(\bx_t, \tau) \geq H(\bx_t, \tau_t)$}\\
        &= \frac{1}{T}\sum_{t=1}^T \E_{z_1,\dots,z_T} [H_t(\bx_t,\tau_t)]  \\
        &\geq \frac{1}{T}\E_{z_1,\dots,z_T} \left(1-\frac{1}{e} \right)\sum_{t=1}^T H_t(\bx^*,\tau^*) -  \frac{1}{T}\E_{z_1,\dots,z_T} \regret_{1-1/e}(T) \\
        &= \left(1-\frac{1}{e} \right)H(\bx^*,\tau^*) -  \frac{1}{T}\E_{z_1,\dots,z_T} \regret_{1-1/e}(T) \tag{since $\E_{z_1,\dots,z_T}H_t(\bx^*,\tau^*) = H(\bx^*,\tau^*)$}.
    \end{align*}
    Since $\tau^*$ is arbitrary, this implies
    \[
        \E[\CVaR_\caD(\bx)] \geq \left(1-\frac{1}{e} \right)\CVaR_\caD(\bx^*) -  \frac{1}{T}\E[\regret_{1-1/e}(T)].
        \]
\end{proof}

Hence, the online algorithm in the previous section immediately yields a $(1-1/e)$-approximation algorithm for CVaR maximization with a convergence rate of $O(T^{-1/4})$.

\subsection{Proof of Theorem~\ref{thm:DR-stochastic}}
It is easy to see that \textsc{StochasticRASCAL} is obtained by \textsc{OnlineRASCAL}.
We note that in \textsc{StochasticRASCAL}, the update of $\tau_t$ is omitted because $\tau_t$ is unnecessary for an i.i.d.~environment.
In other words, only the existence of $\tau_t$ generated by OGD is required.
Hence, by Lemma~\ref{lem:online-to-batch}, \textsc{StochasticRASCAL} has convergence rate $\E[\regret_{1-1/e}(T)]/T = O(T^{-1/4})$, where $\regret_{1-1/e}(T)$ is the $(1-1/e)$-regret of \textsc{OnlineRASCAL}.
The choice of parameters is immediate from Theorem~\ref{thm:DR}.
\section{Proof of Theorem~\ref{thm:matroid}}
We need the several lemma.
Let $\eps, \delta \in (0,1)$ be arbitrary constants.

\begin{lemma}[\citet{Wilder2018}]\label{lem:uniform-approx}
    Let $\caP^*$ be an arbitrary portfolio.
    For $r = \tilde{O}(1/\eps^2)$, there exists a portfolio $\caP$ such that it is a uniform distribution on $r$ sets and $\CVaR_{\caD,\alpha}(\caP) \geq \CVaR_{\caD,\alpha}(\caP^*) - \eps$.
\end{lemma}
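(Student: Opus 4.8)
The plan is to exploit the fact that the CVaR of a portfolio $\caP$ depends on $\caP$ only through the scalar aggregate $g_\caP(z) := \E_{X\sim\caP}f(X;z)$, together with the variational formula $\CVaR_{\caD,\alpha}(\caP) = \max_{\tau\in[0,1]}\bigl(\tau - \frac{1}{\alpha}\E_{z\sim\caD}[\tau - g_\caP(z)]_+\bigr)$. First I would establish a Lipschitz estimate: since $t\mapsto [\tau-t]_+$ is $1$-Lipschitz and the supremum over $\tau$ is nonexpansive as a function of its integrand, for any two portfolios $\caP,\caP'$ one has $|\CVaR_{\caD,\alpha}(\caP) - \CVaR_{\caD,\alpha}(\caP')| \le \frac{1}{\alpha}\E_{z\sim\caD}|g_\caP(z) - g_{\caP'}(z)|$. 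Thus it suffices to find a uniform portfolio $\caP$ on $r$ feasible sets whose aggregate $g_\caP$ is close to $g_{\caP^*}$ in $L^1(\caD)$.

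The construction is by the probabilistic method. I would draw $X_1,\dots,X_r$ i.i.d.\ from $\caP^*$ and let $\caP$ be the uniform distribution on the multiset $\{X_1,\dots,X_r\}$; each $X_i$ is feasible because $\caP^*$ is supported on $\caI$, so $\caP$ is a legal portfolio. For every fixed $z$, $g_\caP(z) = \frac{1}{r}\sum_{i=1}^r f(X_i;z)$ is an empirical mean of i.i.d.\ random variables taking values in $[0,1]$ with mean $g_{\caP^*}(z)$, so its variance is at most $\frac{1}{4r}$ and hence $\E_{\caP}|g_\caP(z) - g_{\caP^*}(z)| \le \frac{1}{2\sqrt r}$ by Cauchy--Schwarz. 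Integrating over $z$ and swapping the order of expectations by Fubini gives $\E_{\caP}\,\E_{z\sim\caD}|g_\caP(z) - g_{\caP^*}(z)| \le \frac{1}{2\sqrt r}$. Combining this with the Lipschitz estimate (which holds pointwise in the random draw) yields $\E_{\caP}[\CVaR_{\caD,\alpha}(\caP)] \ge \CVaR_{\caD,\alpha}(\caP^*) - \frac{1}{2\alpha\sqrt r}$, so some realization of the $r$ sets achieves at least this value. Choosing $r = \Theta(1/(\alpha^2\eps^2))$ makes the deficit at most $\eps$, which proves the lemma.

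The step I expect to require the most care is controlling the discrepancy uniformly enough with respect to $z$. The in-expectation argument above deliberately sidesteps a union bound: because the Lipschitz bound is already phrased in $L^1(\caD)$, a single variance computation per $z$ together with Fubini suffices, and this gives the clean rate $r = O(1/\eps^2)$ with $\alpha$ treated as a constant. The logarithmic factors hidden in the $\tilde O(1/\eps^2)$ of the statement are not needed for mere existence; they would enter only if one instead wanted the approximation to hold with high probability, in which case I would replace the variance bound by Hoeffding's inequality and take a union bound over an $\eps$-net of the threshold parameter $\tau\in[0,1]$ (and, if $\caD$ is accessed only through samples, over the sample set as well), each contributing a $\log$ factor.
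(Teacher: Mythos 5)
Your proof is correct, but be aware that the paper you are being compared against contains no proof of this lemma at all: it is imported by citation from \citet{Wilder2018}, and the $\tilde{O}(1/\eps^2)$ in the statement reflects that source's argument, which (like yours) samples $r$ sets i.i.d.\ from $\caP^*$ but then invokes Hoeffding's inequality together with a union bound to get a high-probability guarantee, which is where the hidden logarithmic factors come from. Your construction is thus the standard one, but your analysis is genuinely different and more elementary: the Lipschitz estimate $\abs{\CVaR_{\caD,\alpha}(\caP)-\CVaR_{\caD,\alpha}(\caP')} \leq \frac{1}{\alpha}\E_{z\sim\caD}\abs{g_\caP(z)-g_{\caP'}(z)}$ (valid since $t \mapsto [\tau-t]_+$ is $1$-Lipschitz and the supremum over $\tau$ is nonexpansive), the per-$z$ variance bound $\E\abs{g_\caP(z)-g_{\caP^*}(z)} \leq \frac{1}{2\sqrt{r}}$ for $[0,1]$-valued i.i.d.\ averages, Fubini, and the probabilistic method are all sound, and the multiset issue (repeated sets in the uniform portfolio) and feasibility of each sampled set are handled correctly. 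What your route buys: pure existence with $r = O(1/(\alpha^2\eps^2))$, with no logarithmic factors and, importantly, no finiteness assumption on the support of $\caD$ -- a point that matters here, since this paper explicitly advertises that its results hold when the number of scenarios is infinite, whereas a union bound over scenarios would not. What the high-probability route buys instead: an algorithmic statement (the sampled portfolio itself works with probability $1-\delta$, not merely some realization), which is what one needs when $\caP^*$ or $\caD$ is accessed only through samples. The $1/\alpha^2$ dependence in your $r$ is absorbed into the $\tilde{O}$ because the paper fixes $\alpha$ as a constant throughout, so your bound implies the stated one.
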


\begin{lemma}[\citet{Wilder2018}]\label{lem:rounding}
    Suppose that we have $q = O(\log (1/\delta)/\eps^3)$ samples $X^1,\dots,X^q$ of \textsc{RandomizedSwapRounding} applied to $\bx$.
    Then, for any submodular function $f$ and its multilinear extension $F$, we have
    $\frac{1}{q}\sum_{j=1}^q f(X^j) \geq F(\bx) - \eps$
    with probability at least $1-\delta$.
\end{lemma}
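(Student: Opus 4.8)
The plan is to treat $\frac1q\sum_{j=1}^q f(X^j)$ as the empirical average of $q$ independent bounded random variables and to amplify a single first-moment guarantee into a high-probability one via a Hoeffding bound. The only structural input I need about \textsc{RandomizedSwapRounding} is that it does not decrease the multilinear value in expectation: for a point $\bx$ in the matroid polytope and a single random output $X$, one has $\E[f(X)] \ge F(\bx)$ for every monotone submodular $f$ with multilinear extension $F$ \citep{Chekuri2010}. Since $X^1,\dots,X^q$ are produced by independent invocations of the rounding on the same $\bx$, they are i.i.d.\ and each satisfies $\E[f(X^j)]\ge F(\bx)$.

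The execution is then short. Each $f(X^j)$ takes values in $[0,1]$, and $Y := \frac1q\sum_{j=1}^q f(X^j)$ has mean $\E[Y]=\frac1q\sum_{j} \E[f(X^j)] \ge F(\bx)$. Because $F(\bx)-\eps \le \E[Y]-\eps$, the event $\{Y \le F(\bx)-\eps\}$ is contained in $\{Y \le \E[Y]-\eps\}$, so Hoeffding's inequality for an average of $q$ independent $[0,1]$-valued variables gives
\[
    \Pr\!\left[ Y \le F(\bx)-\eps \right] \;\le\; \Pr\!\left[ Y \le \E[Y]-\eps \right] \;\le\; \exp\!\left(-2q\eps^2\right).
\]
Setting the right-hand side to at most $\delta$ requires $q \ge \frac{\ln(1/\delta)}{2\eps^2}$, which lies within the stated $q=O(\log(1/\delta)/\eps^3)$ for $\eps \le 1$, completing the argument.

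I expect the only genuinely nontrivial step to be the in-expectation guarantee $\E[f(X)]\ge F(\bx)$; everything after it is a textbook concentration bound. It is worth emphasising why the averaging over $q$ independent roundings is indispensable: a single swap-rounded sample concentrates only up to a \emph{multiplicative} deviation, so $\Pr[f(X)\le F(\bx)-\eps]$ can be as large as a constant when $F(\bx)=\Theta(1)$, and no single sample can meet an additive $\eps$ target with probability $1-\delta$ for small $\eps$. Independence across the $q$ copies is exactly what lets Hoeffding drive the failure probability down to $\delta$, and---crucially---boundedness together with the first-moment bound $\E[f(X)]\ge F(\bx)$ already suffices, so the sharper Chernoff-type concentration of swap rounding \citep{Chekuri2010} is not actually needed here. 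If one instead wished to establish $\E[f(X)]\ge F(\bx)$ from scratch, the work would be in analysing the martingale/exchange structure of the rounding process, which is the main content of \citep{Chekuri2010} and which I would simply cite.
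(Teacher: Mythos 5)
Your proof is correct, but note that the paper itself never proves this lemma: it is imported verbatim from \citet{Wilder2018} as a black box, so there is no internal proof to match against. Your argument is a clean, self-contained alternative: the only structural fact you need about \textsc{RandomizedSwapRounding} is expectation preservation, $\E[f(X)] \geq F(\bx)$, which holds because each swap step moves along a direction $e_i - e_j$ on which the multilinear extension is convex (so monotonicity of $f$ is not even required --- submodularity alone suffices); after that, Hoeffding for i.i.d.\ $[0,1]$-valued averages does the rest. Two remarks. First, your bound $q \geq \ln(1/\delta)/(2\eps^2)$ is actually \emph{stronger} than the stated $q = O(\log(1/\delta)/\eps^3)$, which (for $\eps \leq 1$) it trivially implies; the cubic dependence in Wilder's version is an artifact of their analysis route through the multiplicative lower-tail bound for a single rounded sample, whereas averaging first and applying an additive concentration bound avoids it. Second, your proof silently uses that $f$ takes values in $[0,1]$; this is not literally in the lemma statement (``any submodular function''), but it is forced by the paper's setting $f_t : 2^V \to [0,1]$ and is in fact necessary --- for unbounded $f$ no finite $q$ gives an additive guarantee --- so you should state it as a hypothesis. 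Your closing observation about why a single rounded sample cannot achieve the additive guarantee, and why independence across the $q$ copies is the crux, is accurate and matches how the lemma is used in the paper (per-$(t,i)$ application followed by a union bound with $\delta = O(1/rT)$).
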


Now we prove Theorem~\ref{thm:matroid}.
For any $\bx^{*1}, \dots, \bx^{*r}$ and $\tau^*$, we have
\[
   \sum_{t=1}^T H_t(\bx_t^1, \dots, \bx_t^r, \tau_t) \geq \left(1-\frac{1}{e} \right) \sum_{t=1}^T H_t(\bx^{*1}, \dots, \bx^{*r}, \tau^*) - \regret_{1-1/e}(T),
\]
where
\[
   H_t(\bx_t^1, \dots, \bx_t^r, \tau_t) = \tau_t - \frac{1}{\alpha}{\left[ \tau_t - \frac{1}{r}\sum_{i=1}^r F_t(\bx_t^r) \right]}_+
\]
and
\[
    \regret_{1-1/e}(T) \lesssim
        T^{3/4} \frac{{(kr)}^{3/4}{((1+1/\alpha)\log (nr))}^{1/4}}{\sqrt{\alpha}}
\]
by Theorem~\ref{thm:DR}.
Note that $G_\infty = 1$ for the multilinear extensions $F_t$ and $K^r$ is contained in $\{\bx : \sum_i x_i = kr\}$.
Applying \textsc{RandomizedSwapRounding} to $\bx_t^i$ $q$ times, we have
\begin{align}\label{eq:guarantee-rounding}
    \frac{1}{q} \sum_{j=1}^q f_t(X_t^{i,j}) \geq F_t(\bx_t^i) - \eps
\end{align}
with probability at least $1-\delta$ by Lemma~\ref{lem:rounding}.
By the union bound, we have \eqref{eq:guarantee-rounding} for all $i$, $t$ with probability at least $1-rT\delta$.
Hence, by setting $\delta = O(1/rT)$, we only suffer a constant regret from the event that \eqref{eq:guarantee-rounding} does not hold.
Hence, we can assume that \eqref{eq:guarantee-rounding} holds for all $i$, $t$.

Abusing the notation, let us define
\[
    H_t(\caP, \tau) := \tau - \frac{1}{\alpha}[\tau - \E_{X\sim \caP} f_t(X) ]_+
\]
for a portfolio $\caP$ and $\tau \in \R$.
Recall that $\caP_t$ is the portfolio of $X_t^{i,j}$ for each $t$.
Then, by \eqref{eq:guarantee-rounding}, we have
\[
    H_t(\caP_t, \tau_t) \geq H_t(\bx^1, \dots, \bx^r, \tau_t) - \eps
\]
for all $t$.
Summing this up for $t=1, \dots, T$, we obtain
\begin{align*}
    \sum_{t=1}^T H_t(\caP_t, \tau_t)
    &\geq \sum_{t=1}^T H_t(\bx^1, \dots, \bx^r, \tau_t) - \eps T \\
    &\geq \left(1-\frac{1}{e} \right) \sum_{t=1}^T H_t(\bx^{*1}, \dots, \bx^{*r}, \tau^*) -\eps T - \regret_{1-1/e}(T).
\end{align*}
By the online-to-batch conversion, we have
\begin{align*}
    \E_{z_1,\dots,z_T}\CVaR_\caD(\bar\caP)
    &\geq \left(1-\frac{1}{e} \right) \CVaR_{\caD}(\caP_\text{unif}^*) -\eps -  \frac{1}{T}\E_{z_1,\dots,z_T}\regret_{1-1/e}(T),
\end{align*}
where $\caP^*_\text{unif}$ is any uniform distribution on $r$ sets.
Setting $\eps = T^{-1/4}$, which in turn yields $q = \tilde{O}(T^{3/4})$, the entire error will be dominated by $\regret_{1-1/e}(T)$.
Finally, we need to balance $r$.
Setting $r = \tilde{O}(T^{1/5})$, we can approximate an optimal portfolio $\caP^*$ by a uniform portfolio $\caP^*_\text{unif}$ such that $\CVaR_\caD(\caP^*) - \CVaR_\caD(\caP^*_\text{unif}) \leq T^{-1/10}$ and $r^{3/4}/T^{1/4} = T^{-1/10}$ by Lemma~\ref{lem:uniform-approx}.

\end{document}